\newtheorem{proposition}{Proposition}
\renewcommand{\theequation}{\arabic{equation}}
\newcommand{\be}{\begin{equation}}
\newcommand{\ee}{\end{equation}}
\newcommand{\bea}{\begin{array}}
	\newcommand{\ea}{\end{array}}
\newcommand{\beqa}{\begin{eqnarray}}
\newcommand{\eeqa}{\end{eqnarray}}
\newcommand{\bean}{\begin{eqnarray*}}
	\newcommand{\eean}{\end{eqnarray*}}
\def\up#1{\leavevmode \raise.16ex\hbox{#1}}
\newcommand{\gapproxeq}{\lower
	.7ex\hbox{$\;\stackrel{\textstyle >}{\sim}\;$}}
\newcommand{\lapproxeq}{\lower .7ex\hbox{$\;\stackrel
		{\textstyle <}{\sim}\;$}}
\renewcommand{\theequation}{\thesection.\arabic{equation}}
\newcounter{appendice}
\newcommand{\appendice}
{
	\setcounter{equation}{0}
	\renewcommand{\theequation}{\Alph{appendice}.\arabic{equation}}
	\addtocounter{appendice}{1}
	{\Large{\bf Appendix \Alph{appendice}}}
}
\def\thebibliography#1{{\bf REFERENCES\markboth
		{REFERENCES}{REFERENCES}}\list
	{[\arabic{enumi}]}{\settowidth\labelwidth{[#1]}\leftmargin\labelwidth
		\advance\leftmargin\labelsep
		\usecounter{enumi}}
	\def\newblock{\hskip .11em plus .33em minus -.07em}
	\sloppy
	\sfcode`\.=1000\relax}
\def\z{{\zeta}}
\def\BI{{\mathbb 1}} 
\definecolor{patriarch}{rgb}{0.9, 0.0, 0.5}
\begin{document}

\centerline{\LARGE Exact solutions for scalars and spinors  on quantized  }
\vskip .25cm
\centerline{\LARGE   (Euclidean) $AdS_2$ and the  correspondence principle}
\vskip 1cm


\centerline{ A. Pinzul${}^1$\footnote{apinzul@unb.br} and A. Stern$^{2}$\footnote{astern@ua.edu}}

\vskip 1cm
\begin{center}

  {${}^1$ Universidade  de Bras\'{\i}lia, Instituto de F\'{\i}sica\\
70910-900, Bras\'{\i}lia, DF, Brasil\\
and\\
International Center of Physics\\
C.P. 04667, Bras\'{\i}lia, DF, Brazil
\\}

%
%
%
  {${}^2$ Department of Physics, University of Alabama,\\ Tuscaloosa,
Alabama 35487, USA\\}

\end{center}

\vskip 0.5cm

\abstract{ We obtain  the exact solutions to the field equations for massless scalar and massless spinor fields on quantized two-dimensional anti-de Sitter space. We then apply the  $AdS/CFT$ correspondence principle to  get exact answers for the two point correlation functions  of the  associated   operators on the boundary. The   results  support the conclusion that  conformal symmetry on the boundary is not spoiled by  quantization of the bulk.  Moreover,  quantization of the bulk has no effect on the   spinor two-point correlation function, while  it induces an overall re-scaling in the scalar two-point correlation function.

\newpage}
\section{Introduction}

Research on the quantization of space-time has been carried out with the intention of probing the quasiclassical regime of quantum gravity.  In this regard,  arguments have been given supporting the  idea that  the quasiclassical regime  can be  described in terms of quantum field theory on a non-commutative, i.e., quantized, background.\cite{Doplicher:1994tu}   Meanwhile, the $AdS/CFT$ correspondence principle offers the possibility  of   a nonperturbative description of quantum gravity on asymptotically AdS spaces.\cite{Maldacena:1997re,Witten:1998qj,Freedman:1998tz,Beisert:2010jr} (Despite this, many  applications of the correspondence principle rely on {\it classical} gravity calculations.)
In an attempt to merge these different approaches, we have   studied  the  $AdS/CFT$ correspondence principle in a non-commutative setting.\cite{Pinzul:2017wch,deAlmeida:2019awj ,Lizzi:2020cwx} By this we mean that the geometry on the gravity side of the correspondence was written on a quantized space-time, which we briefly review below.

In \cite{Pinzul:2017wch,deAlmeida:2019awj} we  examined the example where the bulk geometry is the  quantized version of two-dimensional anti-de Sitter space ($AdS_2$).  In this regard, $AdS_2$ has a unique quantization upon imposing the requirement that all the  isometries of the classical geometry are preserved under the quantization.\cite{Ho:2000fy,Fakhri:2003cu,Fakhri:2011zz,Jurman:2013ota,Stern:2014aqa,Chaney:2015ktw}
 \footnote{Although it may not be possible to make the same  demand for arbitrary $d>2$ dimensional anti-de Sitter spaces, higher-dimensional  generalizations of such an isometry preserving quantization procedure do exist for other spaces, namely,  indefinite complex projective spaces  ($\mathbb{CP}^{p,q}$, $q\ge 1$), as was shown in \cite{Lizzi:2020cwx}.}
 The $AdS/CFT$ correspondence principle posits a strong/weak duality between quantum gravity in some $d$ dimensional bulk and a field theory on its $d-1$ boundary.\cite{Maldacena:1997re,Witten:1998qj,Freedman:1998tz,Beisert:2010jr}  The bulk is traditionally taken to be an asymptotically  anti-de Sitter  space, with a corresponding  conformal field theory (CFT) residing on its boundary, although many extensions of the correspondence having been proposed.
 We  argued  in \cite{Pinzul:2017wch,deAlmeida:2019awj}  that the quantized version of $ AdS_2$ is an asymptotically  anti-de Sitter  space.  This is due to i) the result that the star product used to realize the noncommutative algebra approaches the trivial commutative product in the boundary limit, and ii) all quantum corrections to the Killing vectors vanish in this limit.  \footnote{Analogous results were shown for the quantization of $\mathbb{CP}^{p,q},\;q\ge 1$, implying that  these spaces are asymptotically commutative.} Therefore, according to the  correspondence principle,  a dual conformal quantum theory should reside on the boundary of quantized $ AdS_2$ (baring  known difficulties of the correspondence principle for two dimensional anti-de Sitter space\cite{Strominger:1998yg,Maldacena:2016hyu,Jensen:2016pah,Mezei:2017kmw}).   An initial study of the  corresponding boundary theory was  pursued in \cite{Pinzul:2017wch,deAlmeida:2019awj}.  There we examined the two and three point correlation functions of operators on the boundary which were sourced by scalar fields on quantized $AdS_2$.   [More precisely, the analysis and quantization was carried out for the Euclidean version of $AdS_2$ ($EAdS_2$).] Due to rather involved calculations in the approach followed there, the best we could do was obtain the leading order quantum (non-commutative) corrections to the  usual two and three point correlation functions, i.e., those associated with a `classical' space-time bulk.

 In this article we re-examine the correspondence principle for quantized  $EAdS_2$, and obtain {\it exact} results for the two point correlation functions  of operators on the boundary. This is done for operators sourced by both scalar and spinor fields in the bulk.  The exact results for the two point correlators are possible due to the existence of exact solutions to both the scalar and spinor field equations on quantized $ EAdS_2$.  These equations,   which involve the relevant  Laplacian and Dirac operators, can be written purely algebraically.
 For simplicity, we shall specialize here to  the case of massless scalar and spinor fields, although exact solutions should be obtainable for the massive case, as well.  Our result shows that  the conformal properties of the boundary two-point correlation functions are not spoiled by the quantization of the bulk.  This is not unexpected since the corresponding isometries on the gravity side are preserved.  Moreover, the boundary two-point correlation function sourced by spinors on quantized $EAdS_2$ is {\it identical} to that of its classical counterpart.  Therefore, from such  boundary correlation functions, one cannot  distinguish whether the bulk theory is classical or quantum.   On the other hand, the boundary two-point correlation function sourced by scalars on quantized $EAdS_2$ differs from its classical counterpart by a simple overall factor, which agrees with the leading order result of \cite{Pinzul:2017wch,deAlmeida:2019awj}.

The outline for the rest of this article is as follows:  In section 2 we review the dynamics of massless scalar and spinor fields on $EAdS_2$.  Two formulations of spinors are given, which we refer to as the noncovariant and covariant formulations, and a map between the two is given.  While the noncovariant approach is simpler, the covariant approach has a natural generalization to quantum space-time.  As a preliminary step towards quantization we introduce a Poisson structure to  $EAdS_2$.
The quantization of  $EAdS_2$ is reviewed in section 3.  There we write down the Laplacian and Dirac operators on quantized $EAdS_2$, and present exact solutions to the equations for both massless scalar and  spinor fields on this space.  From the asymptotic properties of these solutions, we derive exact results for the associated two-point correlation functions on the boundary in section 4.  Concluding remarks are presented in section 5. In appendix A we show how to evaluate boundary terms on quantized $EAdS_2$.  We find the quantum analogue of the map between the covariant and noncovariant bases  in appendix B, and use it to construct the Dirac operator in the noncovariant basis. { The solutions for massless spinors in that basis are trivially obtained.  They can then be  mapped back to the covariant basis, and the result agrees with the exact solutions of section 3. }

\section{Scalars and spinors on $EAdS_2$}
\setcounter{equation}{0}

\subsection{Preliminaries}

We begin with a review of the Euclidean $AdS_2$ geometry.
 $EAdS_2$ is conventionally coordinatized using Fefferman-Graham coordinates, $(z,t)$, spanning a half-plane ${\mathbb{R}}^{2}_+$, $z>0$, $ - \infty<t<\infty$.  The asymptotic boundary corresponds to $z\rightarrow 0$. The metric tensor in terms of these coordinates is given by
\be ds^2=\frac{dz^2+dt^2}{z^2}\;.\ee
$EAdS_2$ is a maximally isometric space, with the isometries generated by the three Killing vectors
\be
{\cal K}^-=-\partial_t\quad ,\quad{\cal K}^1=-t\partial_t-z\partial_z\quad ,\quad {\cal K}^+=(z^2- t ^2)
\,\partial_t -2 z t \,\partial_z \;,\label{undfrmdKlng}
\ee
which form a basis for the $su(1,1)$ algebra.  Defining ${\cal K}^0=\frac 12({\cal K}^++{\cal K}^-)$ and  ${\cal K}^2=\frac 12({\cal K}^+-{\cal K}^-)$, we get the commutator brackets
\be [{\cal K}^0,{\cal K}^1]={\cal K}^2\;,\qquad\; [{\cal K}^1,{\cal K}^2]=-{\cal K}^0\;,\qquad\; [{\cal K}^2,{\cal K}^0]={\cal K}^1\;.\qquad\; \label{Kcmtrs}\ee

The Laplacian on  $EAdS_2$ is
\be\Delta=z^2(\partial_z^2+\partial_t^2)\;.  \ee
It can be written in terms of the three Killing vectors
\be\Delta={\cal K}^a{\cal K}_a
\;,  \ee
where the index $a=0,1,2$ is raised and lowered using the metric tensor
$[\eta_{ab}]=$ diag$(-1,+1,+1)$.
The equation for a free massless  scalar $\Phi$ on $EAdS_2$,
\be \Delta\Phi=0\;,\label{mslssclrfe}\ee
 is  easily solved.
General solutions  are expressed in terms of analytic and anti-analytic functions of  $\zeta=z+it$,
\be
\Phi(\zeta,\bar\zeta)={f(\zeta)+  g(\bar \zeta)}\;,\label{sfmslssln}
\ee
$f$ and $g$  being complex functions.  Upon approaching the boundary $z=0$ the solution behaves as
$$ \Phi(\zeta,\bar\zeta)=\phi_0(t)+ z \phi_1(t) +{\cal O}(z^2)\;,$$
\be \phi_0(t)={f(it)+  g(-it)}\;,\qquad\; \phi_1(t)=\Bigl({f'(\zeta)+  g'(\bar\zeta)}\Bigr)|_{z=0}\;,\label{bhmslssln}\ee
the prime denoting a derivative.

The Dirac operator on an  arbitrary $d-$dimensional commutative manifold is
\be D=\gamma^\mu(\partial_\mu+\omega_\mu)\;,\label{theeDop}\ee
where $\omega_\mu$, $\mu=1,...,d$ are the components of the spin connection one form and our convention is that the gamma matrices $\gamma^\mu$ satisfy the anti-commutator
\be [\gamma^\mu,\gamma^\nu]_+=2g^{\mu\nu}\BI\;.\label{gmanticmtr}\ee
Upon assuming that  the manifold  is locally Euclidean, the gamma matrices are transformed to a local Euclidean frame using vielbein fields $e^i_{\;\;\mu}$:   $\gamma^\mu\rightarrow \hat \gamma^i=e^i_{\;\;\mu}\gamma^\mu$.  The transformed gamma matrices  $ \hat \gamma^i$   satisfy the anti-commutation relations: $[\hat \gamma^i,\hat \gamma^j]_+=2\delta^{ij}\BI$.  We shall assume that the space has zero torsion.   Denoting a basis of one forms by $\{dx^\mu\}$, we write down  the vielbein and spin connection one forms $e=e_{\mu}dx^\mu$
and $\omega=\omega_\mu dx^\mu$, respectively.  $e_\mu$ and $\omega_\mu $ take values in the Clifford algebra generated by $\hat \gamma^i$,   $e_\mu=e^i_{\;\;\mu}\hat \gamma_i$ and $\omega_\mu=\frac 18\omega_{ij\mu}[\hat \gamma^i,\hat \gamma^j]$.
Then the zero torsion condition is
\be de+\omega\wedge e+ e\wedge\omega=0\;.\label{trsnfree}\ee

In the next two subsections we shall write down two different realizations of the Dirac operator, one in what we call the noncovariant basis and another in the covariant basis. While the former basis is more familiar, the latter is crucial for the isometry preserving quantization.    We also write down the map between the two realizations.  A relevant feature of this map is that it is singular at the asymptotic boundary.  A preliminary step towards quantization is taken in the third subsection by introducing a  natural Poisson structure on the manifold.

\subsection{Dirac operator in a noncovariant basis}
We denote gamma matrices, the Dirac operator  and spinors in the noncovariant basis  with a tilde.  (We shall drop the tilde in the next subsection for the corresponding quantities in the covariant basis.)
In writing down  the Dirac operator on  $EAdS_2$ we can take the two gamma matrices $\{\gamma^\mu\}=\{ \tilde\gamma^{ z}, \tilde \gamma^{ t}\}$ to be
\be  \tilde\gamma^{ z}=z\sigma_1\;,\qquad\quad \tilde \gamma^{ t}= z\sigma_2\;, \ee
$\sigma_i$ being Pauli matrices.
This choice satisfies (\ref{gmanticmtr}).   The vielbein one-form is then
\be e=\sigma_i e^i_{\;\;\mu}dx^\mu=\frac{1}{z}(\sigma_1 dz+\sigma_2 dt) \;.\ee
The torsion free condition (\ref{trsnfree}) is satisfied for
	\be  \omega=\frac{ i}{2z}\sigma_3  dt\;.\ee
The Dirac operator (\ref{theeDop}) then becomes
\beqa{\tilde  D}&=&z(\sigma_1\partial_z+\sigma_2\partial_t)-\frac 12\sigma_1\;=\;z^{3/2}(\sigma_1\partial_z+\sigma_2\partial_t)\frac 1{\sqrt{z}} \cr&&\cr &=&z^{3/2}\pmatrix{0&\partial_z-i\partial_t\cr\partial_z+i\partial_t &0}\frac 1{\sqrt{z}}\;.\label{cmttvdo} \eeqa
The corresponding chirality operator is $\sigma_3$ as it anti-commutes with the Dirac operator $[{\tilde D},\sigma_3]_+=0$,  and squares to the identity.
 ${\tilde D}$ satisfies the Lichnerowicz formula :
\be\sqrt{z}\,{\tilde D}\frac 1z\,{\tilde  D}\,\sqrt{z}=\Delta\;.\label{Dsquared}\ee

The free massless  spinor ${\tilde \Psi}$ on $EAdS_2$ satisfies
$ {\tilde D\tilde\Psi}=0$,
which is  easily solved.
General solutions  are again expressed in terms of analytic and anti-analytic functions of  $\zeta=z+it$,
\be {\tilde \Psi}(\zeta,\bar\zeta)=\sqrt{z}\pmatrix{f(\zeta)\cr  g(\bar \zeta)}\;,\qquad z= Re\, \zeta\;.\label{mslssln}\ee
Here the complex functions $f$ and $g$  associated with solutions of different chirality.  In the boundary limit
 $z\rightarrow 0$,
\be {\tilde \Psi}(\zeta,\bar\zeta)\rightarrow\sqrt{z}\,\tilde\psi_0(t)\;,\qquad\quad\tilde \psi_0(t)=\pmatrix{f(it)\cr  g(-it)}\;.\label{bndrylmtps}\ee

\subsection{Dirac operator in the covariant basis}

In formulating the covariant form of the Dirac operator, it is convenient to embed $EAdS_2$ in  ${\mathbb{R}}^{1,2}$.  Calling the embedding coordinates $X^a$, $a=0,1,2$, $EAdS_2$ corresponds to a single sheeted hyperboloid
\be X^a X_a=-1 \label{XaXaismnsn}\;,
\ee
where  the metric on $ {\mathbb{R}}^{1,2}$ is $[\eta_{ab}]=$diag$(-1,+1,+1)$, and, for convenience, the $AdS_2$ scale is normalized to one.\footnote{Our index convention here is different  from that in
\cite{Pinzul:2017wch,deAlmeida:2019awj}. In the latter, $a=2$ corresponded to the time-like direction.}
 Expressed in terms of Fefferman-Graham coordinates the embedding is given by
\be
X^0=-\frac {z^2+t^2+1}{2z}\;,\qquad\;  X^1=-\frac tz\;, \qquad\;X^2=-\frac {z^2+t^2-1}{2z} \label{XsnFGcrds}\;,
\ee
putting the single sheeted hyperboloid in the region $X^0<0$.

In the covariant formulation of the Dirac operator we choose the chirality operator to be
\be
\gamma =X^a\tau_a\;,\label{cvrntcrltep}
\ee
where $\{\tau_a,\, a=0,1,2\}$ is a basis  for  $su(1,1)$ in the defining representation.  Our choice is  $ (\tau_0,\tau_1,\tau_2)=(\sigma_3,i\sigma_1,i\sigma_2)\,,$ $\sigma_a$ denoting Pauli matrices,  which satisfies
\be  \tau_a\tau_b=-i\epsilon_{abc}\tau^c-\eta_{ab}\BI\;,\label{towprptees}\ee
 $\epsilon_{abc}$ being the Levi-Civita symbol with $\epsilon_{012}=-1$. We note that $\gamma $ is not hermitean, and its square is the unit operator.

The  chirality operator $ \gamma $ can be mapped from the  chirality operator in the noncovariant basis, $\sigma_3$, using the similarity transformation
\be \gamma =U\sigma_3U^{-1}\;,\ee
where  the $2\times 2$ matrix $U$ expressed in complex coordinates $\zeta$ and $\bar \zeta$ is given by
\be  U=\frac 1{2\sqrt{z}}\pmatrix{-\zeta+1&-\bar \zeta -1\cr \zeta+1&\bar\zeta -1}\;.
\label{Uoprtr}\ee
$U$ has determinant equal to one and satisfies
\be U^\dagger\sigma_3U=-\sigma_3\;.\label{prptee4U}\ee
While $U$ has unit determinant for all $z$, $U$   is  singular   in the boundary limit, $z\rightarrow 0$.  More specifically, det$\;\sqrt{z}U\rightarrow 0$ in the limit.

If we apply the same  similarity transformation to the Dirac operator written in the noncovariant basis (\ref{cmttvdo}),
\be  {\tilde D}\rightarrow D=U{\tilde D}U^{-1}\;,\label{tmbtwnDs}\ee we get
\be   D\,=\,i\,\pmatrix {tz\partial_z+\frac 12 (1+t^2-z^2)\partial_t+i &z(t+i)\partial_z-\frac 12\Bigl(z^2-(t+i)^2\Bigr)\partial_t\cr -z(t-i)\partial_z+\frac 12\Bigl(z^2-(t-i)^2\Bigr)\partial_t& -tz\partial_z-\frac 12 (1+t^2-z^2)\partial_t+i}\;.\label{Dprm}\ee
In comparing with (\ref{theeDop}), it corresponds to making the following choice for the gamma matrices
\be  \gamma^z\,=\,iz\pmatrix {t &t+i\cr -t+i& -t}\qquad\quad   \gamma^t\,=\,\frac i2 \pmatrix {1+t^2-z^2 &-z^2+(t+i)^2\cr z^2-(t-i)^2&- (1+t^2-z^2)}\;.\label{altgms}\ee It can be checked that they satisfiy  (\ref{gmanticmtr}). The gamma matrices (\ref{altgms})
 are not  obtained with a local Lorentz transformation from $\hat \gamma^i=\sigma_i$.
The transformed Dirac operator $D$ can be re-expressed in terms of the three Killing vectors (\ref{undfrmdKlng}),
\beqa  D&=&i\pmatrix{ -\frac 12({\cal K}^-+{\cal K}^+)+i&\frac 12({\cal K}^--{\cal K}^+)-i{\cal K}^1\cr -\frac 12({\cal K}^--{\cal K}^+)-i{\cal K}^1&\frac 12({\cal K}^-+{\cal K}^+) +i} \cr&&\cr&=&-i\tau_a{\cal K}^a-\BI\;.\label{Dpryme}\eeqa
This Dirac operator is preserved under simultaneous $2+1$ Lorentz transformations on $\tau_a$ and ${\cal K}^a$.
By construction, $D$ anticommutes with the chirality operator  $\gamma$. This can also be independently verified using the properties (\ref{Kcmtrs}), (\ref{towprptees}) and
\be   [{\cal K}^a, X^b]=\epsilon^{abc} X_c\;,\qquad\quad X_a{\cal K}^a= {\cal K}^aX_a=0\;.\label{cntrKXa}\ee

The solution $\Psi$ to $ D\Psi=0$ is obviously obtained by transforming  the solution (\ref{mslssln})  by $U$
\be
\Psi=U\tilde\Psi=\frac 1{2}\pmatrix{-\zeta+1&-\bar \zeta -1\cr \zeta+1&\bar\zeta -1}
\pmatrix{f(\zeta)\cr  g(\bar \zeta)}\;.\label{mslsslnprm}
\ee
The boundary behavior of $\Psi$ differs from that of $\tilde\Psi$ in (\ref{bndrylmtps}) due to the fact that $U$ is singular at the boundary.  Upon Taylor expanding  (\ref{mslsslnprm}) at $z=0$ up to first order, we get
\beqa \Psi(\zeta,\bar\zeta)&= & \frac{1}{2}
\left(
\begin{array}{c}
 1-i t \\1+ i t \\
\end{array}
\right) (f(i t)-g(-i t))\cr&&\cr&+&\frac{z}{2} \left(
\begin{array}{c}
-f(i t)-g(-i
   t)+(1-i t) \left(f'(i t)-g'(-i t)\right)\\
\;\;\, f(i t)+g(-i t)+(1+i
   t) \left(f'(i t)-g'(-i t)\right)\\
\end{array}
\right)+O\left(z^2\right)\;.\cr&& \label{nxt2ldngrdr}\eeqa
We can write the result as
\be \Psi(\zeta,\bar\zeta)=
 (\sigma_3-i t\BI )\,\psi_0(t)
- i  z\,\partial_t\Bigl(  (\sigma_3-i t\BI )\,\psi_1(t)\Bigr)+{\cal O}(z^2)\;,\label{cantoofrthree}\ee
where
\be\psi_0(t)=\frac12 \Bigl(f(it)- g(-it)\Bigr)\pmatrix{1\cr -1}\;,\quad\quad\psi_1(t)=\frac 12\Bigl(f(it)+ g(-it)\Bigr)\pmatrix{1\cr -1}\;.\label{psi01mns}\ee

\subsection{Poisson structure}

A preliminary step towards quantization is to attach a Poisson structure to the manifold.   Poisson brackets $\{\,,\,\}$  can be uniquely defined on $EAdS_2$ upon demanding that they preserve the isometry.   This choice corresponds to having the embedding coordinates $X^a$ define
 an $su(1,1)$ algebra
\be \{X^a,X^b\}=\epsilon^{abc} X_c\;.\label{PbsfXaXb}\ee
These relations are consistent with writing down the following Poisson bracket for the Feffermann-Graham coordinates spanning ${\mathbb{R}}^{2}_+$\cite{Pinzul:2017wch,deAlmeida:2019awj}
\be\label{tz_class}
\{t,z\}= z^2 \;,
\ee
which can be easily verified using (\ref{XsnFGcrds}).  It follows that $z^{-1}=X^2-X^0$ and $t$ are canonically conjugate coordinates on  ${\mathbb{R}}^{2}_+$,
\be
\{z^{-1},t\}=1\;.\label{zinvtcnpr}
\ee
Alternatively, one can define a Darboux map to coordinates  $(x,y)$ spanning the entire plane, where  {
\be x=-\ln \,z\,,\qquad y=t/z\;, \label{Drbx}\ee and consequently}
$\{x,y\}=1$.

 Using (\ref{undfrmdKlng}),  the action of the Killing vectors on functions ${\cal F}$ on $EAdS_2$ is implemented from Poisson brackets with the embedding coordinates
\be {\cal K}^a{\cal F}=\{ X^a,{\cal F}\}\;.\label{KaFa}\ee
It then trivially follows that the Poisson bracket and the defining relation (\ref{XaXaismnsn}) are
 preserved under the action of the Killing vectors
\beqa  &&{\cal K}^d\{X^a,X^b\}=\{X^d, \{ X^a,X^b\}\}=\epsilon^{abc}\{X^d,  X_c\}= \epsilon^{abc}{\cal K}^dX_c\;,\cr&&\cr
&& {\cal K}^a(X^b X_b)=\{X^a,X^b X_b\}=2\{X^a,X^b\} X_b=0 \;.\eeqa

The field equations for the scalar  and spinor can be expressed using Poisson brackets with the embedding coordinates.  For the massless scalar field $\Phi$,
\be\Delta\Phi=\{ X^a,\{X_a,\Phi\}\}=0\;,\label{ctvanlgfyeq}
\ee
while for the massless spinor $\Psi$,
\be D\Psi=-i\{\gamma,\Psi\}-\Psi =0\;.\label{DeqntPBs}\ee

\section{Scalars and spinors on quantized  $EAdS_2$}
\setcounter{equation}{0}

We begin with the natural isometry preserving quantization of $EAdS_2$ in the first subsection.  There we show how the field equations for the scalar and spinor can be written in an algebraic manner. The exact solutions for massless fields are presented in the second subsection.

\subsection{Quantization}

The quantization of  $EAdS_2$  is unique upon demanding that it preseves the isometry of  $EAdS_2$.\cite{Ho:2000fy,Jurman:2013ota} For this one maps  the embedding coordinates $ X^a$ to operators $ \hat X^a$, satisfying the $su(1,1)$ algebra
\be
[\hat X^a,\hat  X^b]=i\alpha\epsilon^{abc} \hat X_{ c}\;, \label{stpdids4X}
\ee
 where  $\alpha$ is the quantization parameter.
 As with commutative $EAdS_2$, we fix the scale to be one
\be
\hat X^a\hat X_{a}=-\BI\;,\label{sclstpdids4X}
\ee
in analogy with (\ref{XaXaismnsn}). [Any quantum corrections to the Casimir can be removed with a rescaling of $\hat X_a$ and $\alpha$.]

$\hat X_a$ generate the algebra ${\cal A}$ of quantized $EAdS_2$. {They form a basis for  $su(1,1)$  in a unitary irreducible representation.  It was shown in \cite{Pinzul:2017wch} that both the commutative limit and boundary limit can be formulated in terms of these  representations.  In order for there to be a consistent commutative limit to  $EAdS_2$, one has to restrict  the allowed representations to  the  discrete series  $D^\pm(k)$, where $k<0$ labels the representation.   Upon choosing $D^+(k)$, we can go to an eigenbasis of $\hat X^0$, which we denote by $|k,m>$, $m=0,1,2...$.  The action of $\hat X^a$ on these states is
\beqa \hat X_+|k,m>&=&-\alpha  c_m|k,m+1> \;, \cr&&\cr
 \hat X_-|k,m>&=&-\alpha c_{m-1}|k,m-1> \;, \cr&&\cr
 \hat X^0|k,m>&=& -\alpha(m-k)|k,m>\;,\label{Xsonegbs}
\eeqa
where $\hat X_\pm= \hat X^2\pm i \hat X^1$ and $c_m=\sqrt{(m+1)(m-2k)}$.
For  the action of the quadratic Casimir one gets
\beqa \hat X^a\hat X_a|k,m>&=& -\alpha^2 k(k+1)|k,m>\;.\eeqa
Upon comparing with (\ref{sclstpdids4X}), we get  that  $k$  is  related to noncommutativity parameter,
 \be\alpha^2 k(k+1)=1\;,\label{kvsalfa}\ee and we should make the restriction $k<-1$.  The commutative limit $\alpha\rightarrow 0$ then corresponds to $k\rightarrow -\infty$.  In section 3.2 we shall argue that
$\hat X^2-\hat X^0$ is the quantum analogue of $z^{-1}$.  From (\ref{Xsonegbs}), its expectation value goes to infinity when $m\rightarrow \infty$.  So it follows that this is boundary limit.}

It is straightforward to generalize the notion of Killing vectors to quantized $EAdS_2$.  Denoting them by  $\hat {\cal K}^a$, their action on functions $ \hat{\cal F}$ on ${\cal A}$ is given by
\be \hat{\cal K}^a\hat{\cal F}=-\frac i\alpha[ \hat X^a,\hat{\cal F}]\;.\label{qanfKls}\ee
Just as with the Killing vectors  $ {\cal K}^a$,  $\hat {\cal K}^a$ satisfy the
$su(1,1)$ algebra
\be  [\hat {\cal K}^a,\hat {\cal K}^b]=\epsilon^{abc} \hat{\cal K}_{ c}\;,\ee
and  the action  of  $\hat {\cal K}^a$ leaves the defining relations (\ref{stpdids4X}) and (\ref{sclstpdids4X}) invariant.

In defining the Laplacian and Dirac operators on quantized $EAdS_2$ it is convenient to introduce left and right acting operators $\hat X^{La}$ and  $\hat X^{Ra}$,  respectively.  Their action on functions $\hat {\cal F}$ on quantized $EAdS_2$ is defined to be
\be \hat X^{La}\hat {\cal F}= \hat X^a\hat {\cal F}\;,\quad\quad \hat X^{Ra}\hat {\cal F}=\hat {\cal F}\hat X^a\;.\label{LRactnX}\ee
{We note that the right action is an anti-involuton, i.e., for two right acting operators $\hat A^R$ and  $\hat B^R$, $(\hat A \hat B)^R= \hat B^R\hat A^R.$  As a result there will be a sign difference between commutators of right acting operators and the corresponding commutators of left  acting operators.  In addition, right acting operators commute with a left acting operators. }  In other words,  $\hat A^R$ and  $\hat B^R$ are elements of the opposite algebra, ${\cal A}^{op}$.
Thus $\hat X^{La}$ and  $\hat X^{Ra}$ satisfy
\beqa
&& [\hat X^{La},\hat  X^{Lb}]=i\alpha\epsilon^{abc} \hat X^L_c\;,\quad\quad  [\hat X^{Ra},\hat  X^{Rb}]=-i\alpha\epsilon^{abc} \hat X^R_c\;,\quad\quad  [\hat X^{La},\hat  X^{Rb}]=0\;,\cr&&\cr
&&\hat X^{La}\hat  X_a^{L}=\hat X^{Ra}\hat  X_a^{R}=-\BI\;. \eeqa

The quantum version of the Killing vectors can  be expressed as
\be \hat{\cal K}^a=-\frac i\alpha ( \hat X^{La}- \hat X^{Ra}  ) \;.\ee
The obvious choice for the Laplacian operator on  quantized $EAdS_2$ is $ \hat \Delta=\hat{\cal K}^a\hat {\cal K}_{a}$.  It can be rewritten as
\be\hat \Delta=\,-\frac 1{\alpha^2}(  \hat X^{La}-\hat X^{Ra})(  \hat X^{L}_a-\hat X^{R}_a)\,=\,\frac 2{\alpha^2}(\hat X^{Ra}  \hat X^{L}_a+\BI)\;.\ee
So the  equation of motion for a massless  scalar field $\hat\Phi$ on quantized $EAdS_2$ {can be written in a purely algebraic way}
\be
\frac{\alpha^2}{2}\hat \Delta\hat \Phi\;=\;
\hat X_{a}\hat \Phi\hat X^{a}\,+\,\hat\Phi\;=\;0\;.
\label{XfyXeqmfy}
\ee
This equation reduces  to its commutative analogue (\ref{ctvanlgfyeq}) after requiring that the commutator of two functions $\hat{\cal F} $ and $\hat{\cal G} $ on quantized $EAdS_2$ goes to $i\alpha$ times the associated Poisson bracket in the limit $\alpha\rightarrow 0$
\be
[\hat{\cal F},\hat{\cal G}]\rightarrow
i\alpha \{\hat{\cal F},\hat{\cal G}\} \;,\quad{\rm as}\;\;\alpha\rightarrow 0\;.\label{clofopcmtr}
\ee

Before writing down the Dirac operator on  quantized $EAdS_2$, we examine the chirality operator $\hat \gamma$.    It is required to i) commute with the algebra ${\cal A}$, ii) square to unity, and iii) it should reduce to $\gamma$ in the commutative limit $\alpha\rightarrow 0$.   For i), we note that although
$\hat X^{La}$ does not commute with ${\cal A}$,  $\hat X^{Ra}$ does.
Using associativity,
$$ ( \hat X^{Ra}\hat{\cal F})\hat {\cal G}=  \hat X^{Ra}(\hat{\cal F}\hat {\cal G})=\hat {\cal F}\hat {\cal G}\hat X^a\;,$$ for any two function $\hat{\cal F}, \hat{\cal G}\in {\cal A}$,
while we get the same result from
$$ (\hat{\cal F}\hat X^{Ra})\hat{\cal  G}= \hat{\cal F}(\hat X^{Ra}\hat{\cal  G})=\hat {\cal F}\hat {\cal G}\hat X^a\;.$$
Therefore  $\hat \gamma$ should be constructed from $\hat X^{Ra}$, and since we want to recover (\ref{cvrntcrltep}) in the commutative limit, a natural candidate would be $\hat X^{Ra}\tau_a$.  However, this doesn't square to unity, and quantum corrections must be introduced to satisfy ii).
A chirality operator satisfying i-iii) was obtained by Fakhri and Imaanpur\cite{Fakhri:2003cu}:
\beqa \hat\gamma &=&\frac 1{\sqrt{1+\frac{\alpha^2}4}}\,(\hat X^{Ra}\tau_a-\frac \alpha 2\BI)\;.
\label{qntzdcvrntcrlt}\eeqa
Upon taking its square
\be \hat\gamma^2=\frac 1{{1+\frac{\alpha^2}4}}\,\Bigl(\hat X^{Ra}\hat X^{Rb}\,\tau_a\tau_b  -\alpha \hat X^{Ra}\tau_a+\frac{\alpha^2}4\Bigr)
\;=\;\BI\;.
\ee

In writing down the Dirac operator  $\hat D$ on quantized   $AdS_2$, we  should require $ a)$ that  it anti-commutes with the chirality operator $\hat\gamma $ and $b)$ that it reduces to the Dirac operator  $D$ on   $EAdS_2$ (\ref{Dpryme}) in the commutative limit.
A natural candidate  for $\hat D$ that is consistent with the commutative limit is $-i\tau_a\hat{\cal K}^a-\BI$, however this does not satisfy $a)$.
On the other hand, $a)$ is satisfied for $\hat X^{Ra}(\epsilon_{abc} \hat {\cal K}^b\tau^c+\tau_a)$
\be [\,\hat X^{Ra}(\epsilon_{abc} \hat {\cal K}^b\tau^c+\tau_a)\,,\,\hat X^{Rd}\tau_d -\frac \alpha 2\BI\,]_{ +}=0\,,\ee
 In order to satisfy $b)$, one can multiply by $ -\hat\gamma$.  The Dirac operator is then\cite{Fakhri:2003cu},
\beqa \hat D&=&-\hat \gamma\hat X^{Ra}(\epsilon_{abc} \hat {\cal K}^b\tau^c+\tau_a)=\frac i\alpha \,\epsilon_{abc}\hat \gamma\hat X^{Ra}  \hat X^{Lb}\tau^c\;.\label{dopnceads2}\eeqa
Upon expanding out the chirality operator one gets
\be
\hat D=\frac 1{\sqrt{1+\frac{\alpha^2}4}}\,\Bigl(-i\tau_a\hat{\cal K}^a-\BI+ i\alpha \hat X^{R}_a \hat{\cal K}^a-\frac {\alpha} 2 (\epsilon_{abc}\hat X^{Ra}\hat {\cal K}^b+\hat X^R_{c})\tau^c-i \hat X^R_a\tau^a\,\hat X^R_b\hat{\cal K}^b\,\Bigr)\;.\label{frsxate}
\ee
We note that $\hat X^R_{a}\hat {\cal K}^a$  acting on ${\cal A}$ is of order $\alpha$, and so to leading order in $\alpha$,  $ \hat D\sim-i\tau_a\hat{\cal K}^a-\BI$, which is consistent with the requirement $b)$.

{(\ref{dopnceads2})  is the quantization of the Dirac operator $D$ (\ref{Dpryme})  written in the covariant basis.  In Appendix B we obtain the  quantization of the Dirac operator $D$ (\ref{cmttvdo})  written in the noncovariant basis.  This is done by finding the quantum analogue of the $2\times 2 $ matrix $U$ in  (\ref{Uoprtr}), and applying it to write down the inverse of the map  (\ref{tmbtwnDs}). }

Using (\ref{dopnceads2}),
the Dirac equation for a massless particle on quantized $EAdS_2$ is
\be  \hat D\hat\Psi=\frac i\alpha \,\epsilon_{abc}\hat \gamma\hat X^{Ra}  \hat X^{Lb}\tau^c\hat\Psi=0\;,\ee
or simply
\be
\epsilon_{abc}\tau^a \hat X^{b}\hat\Psi\hat X^{c}=0\;.\label{ncDekwthm}
\ee
This equation reduces  to its commutative analogue (\ref{DeqntPBs}), after  again requiring (\ref{clofopcmtr}).
{Like the equation for the massless scalar field (\ref{XfyXeqmfy}), the field equation for a massless spinor  on quantized $EAdS_2$ has an algebraic form.  Both of these equations have exact solutions which we show in the next subsection.}

\newpage
\subsection{Exact solutions}

Here we find the exact solutions to the quantized massless Klein-Gordon (\ref{XfyXeqmfy}) and Dirac (\ref{ncDekwthm}) equations. We will start with the scalar case and then we will show how to map the scalar solutions to the spinor ones. Because the exposition is rather technical, we will structure it as the sequence of several propositions.

As the first step we want to establish the quantum analogue of the Fefferman-Graham coordinates and their relation to the quantum embedding coordinates. Classically, this relation is given in (\ref{XsnFGcrds}). Inverting this we have for the commutative Fefferman-Graham coordinates
\be\label{ztX_classic}
z = \frac{1}{X^2 - X^0}\ ,\qquad t = - z X^1 \equiv -\frac{X^1}{X^2 - X^0} \ .
\ee
As a quantization of these relations we will take the symmetric ones
\beqa\label{ztX_quantum}
\hat{z} &=& (\hat{X}^2 - \hat{X}^0)^{-1}\;,\cr &&\cr\ \hat{t} &=& -\frac{1}{2} (\hat{z} \hat{X}^1 + \hat{X}^1 \hat{z}) \equiv -\frac{1}{2} \Bigl((\hat{X}^2 - \hat{X}^0)^{-1} \hat{X}^1 + \hat{X}^1 (\hat{X}^2 - \hat{X}^0)^{-1}\Bigr)\quad
\eeqa
and show below that this is a consistent choice. Because $[\hat{z}, \hat{X}^a]= - \hat{z}[\hat{z}^{-1}, \hat{X}^a ]\hat{z}$, it is easy to find all the commutators $[\hat{z}, \hat{X}^a]$ using $\hat{z}^{-1} = \hat{X}^2 - \hat{X}^0$ and (\ref{stpdids4X})
\beqa\label{zX}
&&[\hat{z}, \hat{X}^1] = i\alpha \hat{z} \ ,\nonumber\\
&&[\hat{z}, \hat{X}^0] = - i\alpha \hat{z} \hat{X}^1 \hat{z} \equiv i\alpha \hat{z}\hat{t} - \frac{\alpha^2}{2}\hat{z}^2 \equiv i\alpha \hat{t}\hat{z} + \frac{\alpha^2}{2}\hat{z}^2 \equiv \frac{i\alpha}{2} (\hat{z}\hat{t} + \hat{t}\hat{z}) \ ,\nonumber\\
&&[\hat{z}, \hat{X}^2] = [\hat{z}, \hat{X}^0] \ .
\eeqa
From these commutators {and the definition of $\hat t$ in (\ref{ztX_quantum})} we immediately get
\be\label{xt}
[\hat{t},\hat{z}]=i\alpha \hat{z}^2 \ ,\qquad [\hat{t},\hat{z}^{-1}]= - i\alpha \ ,
\ee
i.e. the correct quantization of (\ref{tz_class}) and (\ref{zinvtcnpr}) without any quantum corrections.

To find the commutators for $\hat{t}$ and $\hat{X}^a$, we first want to find the quantum analogue of (\ref{XsnFGcrds}). It turns out that the quantization is non-trivial, compared to (\ref{ztX_quantum}). While for $\hat{X}^1$ one trivially has from (\ref{ztX_quantum})
\be\label{X1}
\hat{X}^1 = -\frac{1}{2}(\hat{z}^{-1}\hat{t} + \hat{t}\hat{z}^{-1}) \equiv - \hat{t}\hat{z}^{-1} - \frac{i\alpha}{2} \equiv - \hat{z}^{-1}\hat{t} + \frac{i\alpha}{2} \ ,
\ee
the similar relations for $\hat{X}^{0,2}$ are less trivial. Using (\ref{sclstpdids4X}) we have
\be
\hat{z}^{-1}(\hat{X}^2 + \hat{X}^0) = (\hat{X}^2 - \hat{X}^0)(\hat{X}^2 + \hat{X}^0) = -1 - (\hat{X}^1)^2 +i\alpha \hat{X}^1 \ .
\ee
Then combining this with (\ref{X1}) we get
\be
\hat{X}^2 + \hat{X}^0 = -\kappa(\alpha)^2 \hat{z} - \hat{t}\hat{z}^{-1}\hat{t}\ ,
\label{X2plsX0}\ee
where we define
\be\kappa(\alpha) :=\sqrt{1 + \frac{\alpha^2}{4}}\label{kappa}\ee
{ Notice that the same factor appears in the denominator of the quantum chirality operator (\ref{qntzdcvrntcrlt}) and  Dirac operator (\ref{dopnceads2}).
{Furthermore, from (\ref{kvsalfa}), this factor can be related to  the label $k$ of  the discrete series representation $D^+(k)$ in a simple way:  $\frac{ \kappa(\alpha)}\alpha=-k-\frac 12\;.$   (Recall, $k<-1$.) }

 Combining  (\ref{X2plsX0})} with the definition of $\hat{z}$ we finally get the full set of relations
\beqa\label{Xtz}
&&\hat{X}^1 = -\frac{1}{2}(\hat{z}^{-1}\hat{t} + \hat{t}\hat{z}^{-1}) \equiv - \hat{t}\hat{z}^{-1} - \frac{i\alpha}{2} \equiv - \hat{z}^{-1}\hat{t} + \frac{i\alpha}{2} \ ,\nonumber \\
&&\hat{X}^0 = -\frac{1}{2}(\kappa(\alpha)^2 \hat{z} + \hat{t}\hat{z}^{-1}\hat{t} + \hat{z}^{-1}) \ ,\nonumber \\
&&\hat{X}^2 = -\frac{1}{2}(\kappa(\alpha)^2 \hat{z} + \hat{t}\hat{z}^{-1}\hat{t} - \hat{z}^{-1}) \ .
\eeqa
Note the appearance of a non-trivial deformation factor, $\kappa(\alpha)^2$, which would be impossible to guess assuming some minimal quantization, as it was done in (\ref{ztX_quantum}). It is easy to verify explicitly that this factor is essential to guarantee the correct algebraic relations (\ref{stpdids4X}) {starting with the fundamental commutation relations (\ref{xt}).

Using the result (\ref{Xtz}), it is an easy exercise to find the commutators $[\hat{t},\hat{X}^a]$
\beqa\label{tX}
&&[\hat{t},\hat{X}^1] = i\alpha \hat{t} \ , \nonumber\\
&&[\hat{t},\hat{X}^0] = \frac{i\alpha}{2} (\hat{t}^2 -\kappa(\alpha)^2 \hat{z}^2 +1)\ , \nonumber\\
&&[\hat{t},\hat{X}^2] = \frac{i\alpha}{2} (\hat{t}^2 -\kappa(\alpha)^2 \hat{z}^2 -1)\ .
\eeqa

{Now we focus on solutions to the quantum scalar field equation of motion (\ref{XfyXeqmfy}), and }  formulate our first proposition:
\begin{proposition}\label{prop1}
$\hat{z}$ and $\hat t$, as defined in (\ref{ztX_quantum}), are exact solutions of the quantum scalar field equation of motion (\ref{XfyXeqmfy}).
\end{proposition}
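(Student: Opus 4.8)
The plan is to avoid the triple product $\hat X_a\hat\Phi\hat X^a$ altogether and instead exploit the factorization of the Laplacian through the quantum Killing vectors. Since $\hat\Delta=\hat{\cal K}^a\hat{\cal K}_a$ and the scalar equation of motion (\ref{XfyXeqmfy}) is precisely $\frac{\alpha^2}{2}\hat\Delta\hat\Phi=\hat X_a\hat\Phi\hat X^a+\hat\Phi=0$, it suffices to prove that $\hat{\cal K}^a\hat{\cal K}_a$ annihilates both $\hat z$ and $\hat t$. The advantage of this route is that each $\hat{\cal K}^a=-\frac i\alpha[\hat X^a,\,\cdot\,]$ is an inner derivation of ${\cal A}$, so the Leibniz rule applies when $\hat{\cal K}^a$ is pushed through products — something not available for the bare left/right multiplications appearing in $\hat X_a\hat\Phi\hat X^a$.

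First I would record the first-order action. Reading $\hat{\cal K}^a=-\frac i\alpha[\hat X^a,\,\cdot\,]$ off the commutators (\ref{zX}) and (\ref{tX}) gives $\hat{\cal K}^1\hat z=-\hat z$ and $\hat{\cal K}^0\hat z=\hat{\cal K}^2\hat z=-\frac12(\hat z\hat t+\hat t\hat z)$, together with $\hat{\cal K}^1\hat t=-\hat t$ and $\hat{\cal K}^{0,2}\hat t=-\frac12(\hat t^2-\kappa(\alpha)^2\hat z^2\pm1)$. The decisive structural feature is that $\hat{\cal K}^0$ and $\hat{\cal K}^2$ act \emph{identically} on $\hat z$ and on the quadratic part of $\hat t$, differing only by the additive constant $\pm1$; in particular $\hat{\cal K}^0\hat t-\hat{\cal K}^2\hat t=-1$.

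Next I would apply a second Killing vector, using the derivation property to distribute $\hat{\cal K}^a$ across the quadratic expressions above, and then assemble $\hat{\cal K}^a\hat{\cal K}_a=-(\hat{\cal K}^0)^2+(\hat{\cal K}^1)^2+(\hat{\cal K}^2)^2$ with the Lorentzian signature. In both the $\hat z$ and the $\hat t$ computations, the genuinely cubic contributions — a quadratic operator times a linear one, in all their symmetrized orderings — are produced identically by $(\hat{\cal K}^0)^2$ and $(\hat{\cal K}^2)^2$ and therefore cancel, since these two terms enter $\hat{\cal K}^a\hat{\cal K}_a$ with opposite signs. What survives is governed by the constant difference $\hat{\cal K}^0\hat t-\hat{\cal K}^2\hat t=-1$; propagated back through the Leibniz expansion it yields exactly $-\hat z$ (respectively $-\hat t$), which cancels the $+\hat z$ (respectively $+\hat t$) produced by $(\hat{\cal K}^1)^2$. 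Hence $\hat{\cal K}^a\hat{\cal K}_a\hat z=\hat{\cal K}^a\hat{\cal K}_a\hat t=0$, establishing the proposition.

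The main obstacle is purely the operator bookkeeping in the last step: one must keep the orderings in the symmetrized products $\hat z\hat t+\hat t\hat z$, $\hat t^2-\kappa(\alpha)^2\hat z^2$, and their first Killing images strictly, and verify that the cubic terms truly cancel rather than merely simplify. It is worth noting that the deformation factor $\kappa(\alpha)^2$ enters the $\hat{\cal K}^0$ and $\hat{\cal K}^2$ actions on $\hat t$ in exactly the same way and hence drops out of the final combination, so the vanishing of $\hat\Delta\hat z$ and $\hat\Delta\hat t$ is actually insensitive to $\kappa(\alpha)^2$; nevertheless $\kappa(\alpha)^2$ is indispensable upstream, since without it the commutators (\ref{tX}) would fail to close into the $su(1,1)$ relations (\ref{stpdids4X}) on which the whole computation rests.
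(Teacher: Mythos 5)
Your proposal is correct, and it takes a genuinely different route from the paper's proof. You keep the field equation in second-order form, $\hat{\cal K}^a\hat{\cal K}_a\hat\Phi=0$, and exploit the fact that each $\hat{\cal K}^a$ is an inner derivation: the Leibniz rule together with the Lorentzian contraction makes the cubic terms produced by $(\hat{\cal K}^0)^2$ and $(\hat{\cal K}^2)^2$ cancel (since $\hat{\cal K}^0$ and $\hat{\cal K}^2$ act identically on $\hat z$, and their actions on $\hat t$ differ only by the constant $-1$), and the surviving pieces are $-(\hat{\cal K}^0)^2\hat z+(\hat{\cal K}^2)^2\hat z=-\hat z$ and $-(\hat{\cal K}^0)^2\hat t+(\hat{\cal K}^2)^2\hat t=-\hat t$, which indeed cancel against $(\hat{\cal K}^1)^2\hat z=\hat z$ and $(\hat{\cal K}^1)^2\hat t=\hat t$; I checked the bookkeeping and it closes. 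The paper proceeds differently: using the Casimir relation $\hat X^a\hat X_a=-\BI$ it first reduces the equation to the first-order condition $\hat X_a[\hat\Phi,\hat X^a]=0$ (equation (\ref{equivalent})), after which $\hat z$ is a solution by a one-line computation from (\ref{zX}); for $\hat t$ it establishes a multiplicative criterion (if $\hat A$ is a solution, then $\hat A\hat B$ is a solution iff $\hat X_a\hat A[\hat B,\hat X^a]=0$) and applies it to $\hat t=-\hat z\hat X^1+\frac{i\alpha}{2}\hat z$. The paper's route is computationally lighter, and, more importantly, the first-order reformulation and the product criterion are exactly the tools reused to prove Propositions \ref{prop2} and \ref{prop3}; your route is self-contained, avoids the equivalence lemma, and makes transparent why $\kappa(\alpha)^2$ is irrelevant to this particular cancellation (it enters the actions of $\hat{\cal K}^0$ and $\hat{\cal K}^2$ on $\hat t$ identically and drops out of the signed sum), but it does not by itself set up the machinery needed for the induction to general polynomials in $\hat\Xi$.
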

{This is the analogue of the obvious fact that $z$ and $t$ solve the classical field equation (\ref{mslssclrfe}).  The proof of the proposition  is as follows:}
\begin{proof}
Using (\ref{sclstpdids4X}) it easy to show that (\ref{XfyXeqmfy}) can be equivalently rewritten as
\be\label{equivalent}
\hat{X}_a \hat{\Phi}\hat{X}^a + \hat{\Phi} =0\ \Leftrightarrow\ \hat{X}_a [\hat{\Phi},\hat{X}^a]=[\hat{X}_a ,\hat{\Phi}]\hat{X}^a=0 \ .
\ee
Then $\hat{X}_a [\hat{z},\hat{X}^a] = 0$ is an immediate consequence of (\ref{zX}), which shows that $\hat{z}$ is an exact solution.

For $\hat{t}$ we can proceed the same way, using (\ref{tX}), but let us prove it slightly differently by establishing some very useful criteria for a solution. Namely, if we know that some $\hat{A}$ is a solution, then $\hat{A}\hat{B}$ will be a solution for some $\hat{B}$ if and only if $\hat{X}_a \hat{A}[\hat{B},\hat{X}^a]=0$. This is trivially verified. Then one has that $\hat{t} = -\hat{z}\hat{X}^1 +\frac{i\alpha}{2}\hat{z}$ will be a solution if $\hat{X}_a \hat{z}[\hat{X}^1,\hat{X}^a]=0$, which is trivially true after some simple algebra.
\end{proof}

The classical solution (\ref{mslssln}) is given in terms of arbitrary functions of either $\z$ or $\bar{\z}$. Motivated by this and by the possibility that the quantum $\hat{z}$ ``coordinate'' could be re-scaled, as in (\ref{Xtz}), we will look for the general solutions of (\ref{XfyXeqmfy}) in the form of an arbitrary polynomial in the quantum variable $\hat{\Xi} = C\hat{z} + i\hat{t}$, where $C$ is a constant to be fixed later. First of all, in view of Proposition \ref{prop1}, $\hat{\Xi}$ is an exact solution for arbitrary $C$. The next proposition fixes this constant.
\begin{proposition}\label{prop2}
$\hat{\Xi}^2$ is an exact solution of the quantum scalar equation of motion (\ref{XfyXeqmfy}) iff $C = \pm \sqrt{1+\frac{\alpha^2}{4}}\equiv \pm\kappa(\alpha)$.
\end{proposition}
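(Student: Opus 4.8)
The plan is to reduce the claim to a single operator identity by reusing the factorization criterion established in the proof of Proposition~\ref{prop1}. Since $\hat{\Xi}=C\hat{z}+i\hat{t}$ is, by the linearity of (\ref{equivalent}) together with Proposition~\ref{prop1}, an exact solution for \emph{every} value of $C$, I would apply that criterion with $\hat{A}=\hat{B}=\hat{\Xi}$: the product $\hat{\Xi}^2$ solves (\ref{XfyXeqmfy}) if and only if
\be
\hat{X}_a\,\hat{\Xi}\,[\hat{\Xi},\hat{X}^a]=0 \;.
\ee
Everything then hinges on evaluating this contraction and isolating its dependence on $C$ and $\alpha$.

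Next I would compute the three commutators $[\hat{\Xi},\hat{X}^a]=C[\hat{z},\hat{X}^a]+i[\hat{t},\hat{X}^a]$ directly from (\ref{zX}) and (\ref{tX}). The $a=1$ component collapses to the clean relation $[\hat{\Xi},\hat{X}^1]=i\alpha\,\hat{\Xi}$, while the $a=0,2$ components give $\frac{i\alpha}{2}\left[C(\hat{z}\hat{t}+\hat{t}\hat{z})+i(\hat{t}^2-\kappa^2\hat{z}^2\pm 1)\right]$. The feature to flag here is that the deformation factor $\kappa^2=1+\alpha^2/4$ enters the problem only through these $a=0,2$ commutators (via (\ref{tX}), and ultimately through (\ref{Xtz})), so the deformed threshold $C^2=\kappa^2$ can only emerge from balancing these pieces against the $C$-dependence carried by the left factor $\hat{\Xi}$ and by $\hat{\Xi}^2$.

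I would then substitute these commutators into the contraction, lower the index with $\eta_{ab}=\mathrm{diag}(-1,+1,+1)$, and reduce the resulting expression entirely to the $\hat{z},\hat{t}$ algebra using $\hat{X}^2-\hat{X}^0=\hat{z}^{-1}$, the explicit forms (\ref{Xtz}) of $\hat{X}^0,\hat{X}^1,\hat{X}^2$, and the fundamental commutators (\ref{xt}). The expectation is that after consistent reordering everything cancels except for a net contribution proportional to $(C^2-\kappa^2)$ multiplied by a manifestly invertible operator (built from $\hat{z}$ and $\hat{z}^{-1}$); this forces $C^2=\kappa^2$, i.e.\ $C=\pm\kappa(\alpha)$, and conversely this value makes the contraction vanish identically, giving the ``iff''. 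As a consistency check I would verify the commutative limit $\alpha\to 0$, where $\kappa\to 1$ and the condition reduces to $C=\pm 1$, reproducing the harmonicity of $\zeta^2$ and $\bar\zeta^{\,2}$.

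The main obstacle is the noncommutative bookkeeping in this last step: the $\alpha^2/4$ piece of $\kappa^2$ is not supplied by any single commutator but is assembled from the ordering corrections generated when moving $\hat{z}^{-1}$, $\hat{t}$ and $\hat{z}$ past one another, each reordering costing a factor of $i\alpha$ through (\ref{xt}). One must therefore symmetrize carefully and retain every such correction; dropping them would spuriously yield the undeformed condition $C^2=1$ rather than the correct $C^2=1+\alpha^2/4$. I expect this is precisely why $\hat{t}$ is taken to be the symmetric combination in (\ref{ztX_quantum}) and why the factor $\kappa(\alpha)^2$ is forced in (\ref{Xtz}).
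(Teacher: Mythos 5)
Your proposal is correct and follows the same route as the paper: the factorization criterion from Proposition~\ref{prop1} applied with $\hat A=\hat B=\hat\Xi$, the commutators $[\hat X^a,\hat\Xi]$ (yours agree with (\ref{XXi})), and the conclusion that the obstruction is $(C^2-\kappa^2)$ times an invertible operator. The one place you diverge --- and exactly where you anticipate trouble --- is the final evaluation. You plan to expand $\hat X_a\hat\Xi\,[\hat\Xi,\hat X^a]$ directly, inserting the quadratic expressions (\ref{Xtz}) for $\hat X_a$ and reordering in the $\hat z,\hat t$ algebra; this does terminate at $\alpha^2(C^2-\kappa^2)\hat z^2$, and invertibility of $\hat z^2$ then gives the ``iff'', but it is degree-five noncommutative bookkeeping. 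The paper sidesteps this by using the solution property of $\hat\Xi$ a \emph{second} time: writing the criterion in the equivalent form $[\hat X_a,\hat\Xi]\,\hat\Xi\,\hat X^a=0$ and commuting $\hat\Xi$ through $\hat X^a$, the term $([\hat X_a,\hat\Xi]\hat X^a)\hat\Xi$ vanishes, so the condition collapses to the pure commutator contraction $[\hat X_a,\hat\Xi][\hat X^a,\hat\Xi]=0$ of (\ref{XXiXXi}). The $\hat X_a$ prefactor is gone, and the contraction is a three-line computation: the $\pm1$ pieces of the $a=0,2$ commutators survive only in a cross term, the $a=1$ commutator squares to $-\alpha^2\hat\Xi^2$, and everything telescopes to $\alpha^2(\kappa(\alpha)^2-C^2)\hat z^2$. (Incidentally, the paper's display reads $\kappa(\alpha)^2-C$; this is a typo for $\kappa(\alpha)^2-C^2$, as your expected form and the stated conclusion $C=\pm\kappa(\alpha)$ both confirm; the overall sign difference from your $(C^2-\kappa^2)$ is just $\hat X_a\hat\Xi[\hat\Xi,\hat X^a]=-[\hat X_a,\hat\Xi][\hat X^a,\hat\Xi]$.) One small imprecision in your narrative: along your direct route, $\kappa^2$ enters not only through the $a=0,2$ commutators but also through the explicit $\kappa(\alpha)^2\hat z$ terms in $\hat X^0$ and $\hat X^2$ themselves; only after the paper's reduction are the commutators the sole source of the deformation.
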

\begin{proof}
First of all, from the observation made in the proof of Proposition \ref{prop1} and from the fact that $\hat{\Xi}$ is a solution, we know that $\hat{\Xi}^2$ will be a solution if and only if $\hat{X}_a \hat{\Xi}[\hat{\Xi}, \hat{X}^a]=[\hat{X}^a ,\hat{\Xi}]\hat{\Xi}\hat{X}_a=0$. Using $[\hat{X}^a ,\hat{\Xi}]\hat{X}_a = 0$ (because $\hat{\Xi}$ is a solution), this is equivalent to
\be\label{XXiXXi}
[\hat{X}_a ,\hat{\Xi}][\hat{X}^a ,\hat{\Xi}]=0 \ .
\ee
Each commutator in (\ref{XXiXXi}) we can easily calculate from (\ref{zX}) and (\ref{tX}) giving
\beqa\label{XXi}
&&[\hat{X}^1, \hat{\Xi}] = -i\alpha \hat{\Xi} \ ,\nonumber \\
&&[\hat{X}^0, \hat{\Xi}] = -\frac{i\alpha}{2} C(\hat{z}\hat{t}+\hat{t}\hat{z}) + \frac{\alpha}{2}(\hat{t}^2 - \kappa(\alpha)^2 \hat{z}^2 +1) \ ,\nonumber \\
&&[\hat{X}^2, \hat{\Xi}] = -\frac{i\alpha}{2} C(\hat{z}\hat{t}+\hat{t}\hat{z}) + \frac{\alpha}{2}(\hat{t}^2 - \kappa(\alpha)^2 \hat{z}^2 -1) \ .
\eeqa
Plugging this in (\ref{XXiXXi}) we get
\be
0=[\hat{X}_a ,\hat{\Xi}][\hat{X}^a ,\hat{\Xi}]=\alpha^2 (\kappa(\alpha)^2 - C)\hat{z}^2
\ee
from where the result follows.
\end{proof}
So, we see that $\hat{\Xi}_\pm := \pm\kappa(\alpha)\hat{z}+i\hat t$ are the quantum analogues of the classical $\z$ and $-\bar{\z}$. Using the found value for $C$, the relations (\ref{XXi}) could be written in a more compact and closed form
\beqa\label{XXi_nice}
&&[\hat{X}^1, \hat{\Xi}] = -i\alpha \hat{\Xi} \ ,\nonumber \\
&&[\hat{X}^0, \hat{\Xi}] = -\frac{\alpha}{2} (\hat{\Xi}^2 -1) \ ,\nonumber \\
&&[\hat{X}^2, \hat{\Xi}] = -\frac{\alpha}{2} (\hat{\Xi}^2 +1) \ .
\eeqa
Now this will allow us to prove the claimed result.
\begin{proposition}\label{prop3}
An arbitrary polynomial in $\hat{\Xi}$ (either of $\hat{\Xi}_\pm$) is an exact solution of the quantum scalar equation of motion (\ref{XfyXeqmfy}).
\end{proposition}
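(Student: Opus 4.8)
The plan is to reduce the claim to individual monomials and then argue by induction on the degree. Since the field equation (\ref{XfyXeqmfy}), written in the equivalent form $\hat{X}_a[\hat{\Phi},\hat{X}^a]=0$ of (\ref{equivalent}), is linear in $\hat{\Phi}$, it suffices to prove that every power $\hat{\Xi}^n$, $n\geq 0$, is an exact solution; an arbitrary polynomial then solves the equation by superposition. The induction will lean on the recursive criterion established in the proof of Proposition \ref{prop1}: if $\hat{A}$ is a solution, then $\hat{A}\hat{B}$ is a solution if and only if $\hat{X}_a\hat{A}[\hat{B},\hat{X}^a]=0$. The base cases $n=0$ (the constant $\BI$, trivially a solution), $n=1$ (Proposition \ref{prop1}), and $n=2$ (Proposition \ref{prop2}) are already in hand.

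For the inductive step I would assume $\hat{\Xi}^n$ is a solution and take $\hat{A}=\hat{\Xi}^n$, $\hat{B}=\hat{\Xi}$ in the criterion, so that $\hat{\Xi}^{n+1}$ is a solution precisely when $\hat{X}_a\hat{\Xi}^n[\hat{\Xi},\hat{X}^a]=0$. Splitting $\hat{X}_a\hat{\Xi}^n=\hat{\Xi}^n\hat{X}_a+[\hat{X}_a,\hat{\Xi}^n]$ gives
\be
\hat{X}_a\hat{\Xi}^n[\hat{\Xi},\hat{X}^a]=\hat{\Xi}^n\,\hat{X}_a[\hat{\Xi},\hat{X}^a]+[\hat{X}_a,\hat{\Xi}^n][\hat{\Xi},\hat{X}^a]\;.
\ee
The first term vanishes because $\hat{\Xi}$ is itself a solution (Proposition \ref{prop1}), i.e. $\hat{X}_a[\hat{\Xi},\hat{X}^a]=0$. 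Everything therefore hinges on showing that the second term, $[\hat{X}_a,\hat{\Xi}^n][\hat{\Xi},\hat{X}^a]$, vanishes as well.

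This is the step I expect to be the crux, and it is exactly where the closed form (\ref{XXi_nice}) does the work. Expanding by the Leibniz rule, $[\hat{X}_a,\hat{\Xi}^n]=\sum_{j=0}^{n-1}\hat{\Xi}^j[\hat{X}_a,\hat{\Xi}]\hat{\Xi}^{n-1-j}$, I would invoke the key feature of (\ref{XXi_nice}), namely that each $[\hat{\Xi},\hat{X}^a]$ is a polynomial in $\hat{\Xi}$ alone and hence commutes with every power of $\hat{\Xi}$. This lets each summand be reorganized into $\hat{\Xi}^j\big([\hat{X}_a,\hat{\Xi}][\hat{\Xi},\hat{X}^a]\big)\hat{\Xi}^{n-1-j}$, where the index contraction has been brought inside the sum. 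The bracketed factor is $[\hat{X}_a,\hat{\Xi}][\hat{\Xi},\hat{X}^a]=-[\hat{X}_a,\hat{\Xi}][\hat{X}^a,\hat{\Xi}]$, which is precisely the object (\ref{XXiXXi}) shown to vanish in Proposition \ref{prop2} for $C=\pm\kappa(\alpha)$. Each term in the sum then dies and the inductive step closes. The argument goes through verbatim for both $\hat{\Xi}_\pm$, since both (\ref{XXi_nice}) and the vanishing of (\ref{XXiXXi}) hold for either sign $C=\pm\kappa(\alpha)$. The only delicate point is the bookkeeping in the Leibniz expansion: one must be careful that it is precisely the commutativity of $[\hat{\Xi},\hat{X}^a]$ with powers of $\hat{\Xi}$ — a consequence of having the commutators in the compact polynomial form (\ref{XXi_nice}) rather than the raw form (\ref{XXi}) — that permits the contraction to be carried out inside the sum, collapsing the whole expression onto the single identity of Proposition \ref{prop2}.
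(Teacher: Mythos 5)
Your proof is correct, and its skeleton --- induction on the degree, the multiplicative criterion extracted from the proof of Proposition \ref{prop1}, and the key observation that (\ref{XXi_nice}) makes each $[\hat{\Xi},\hat{X}^a]$ a polynomial in $\hat{\Xi}$ alone, hence commuting with all powers of $\hat{\Xi}$ --- is the same as the paper's. The difference is in how you close the inductive step. The paper uses the commutativity in one move, $\hat{X}_a\hat{\Xi}^{n-1}[\hat{\Xi},\hat{X}^a]=\hat{X}_a[\hat{\Xi},\hat{X}^a]\hat{\Xi}^{n-1}=0$, so it needs only the linear contraction identity $\hat{X}_a[\hat{\Xi},\hat{X}^a]=0$ (i.e., that $\hat{\Xi}$ itself is a solution). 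You instead first split $\hat{X}_a\hat{\Xi}^n=\hat{\Xi}^n\hat{X}_a+[\hat{X}_a,\hat{\Xi}^n]$, kill the first piece with that same linear identity, and then dispose of the commutator piece by a Leibniz expansion combined with the quadratic identity $[\hat{X}_a,\hat{\Xi}][\hat{X}^a,\hat{\Xi}]=0$ of (\ref{XXiXXi}). That does work --- each Leibniz summand collapses onto (\ref{XXiXXi}) once $[\hat{\Xi},\hat{X}^a]$ is commuted through --- but it is a detour: the very commutativity you invoke midway already lets you slide the whole factor $\hat{\Xi}^n$ past $[\hat{\Xi},\hat{X}^a]$ at the outset, which renders the split and the appeal to the quadratic identity unnecessary. (Proposition \ref{prop2} is of course still needed in both arguments, but only to fix $C=\pm\kappa(\alpha)$ so that (\ref{XXi_nice}) holds in the first place.) The net effect is that your inductive step spends two contraction identities where one suffices; the conclusion, and its validity for both $\hat{\Xi}_\pm$, is the same.
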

\begin{proof}
{The proof is by induction.}  We already know that $\hat{\Xi}$ and $\hat{\Xi}^2$ are solutions (we need to start with $\hat{\Xi}^2$, because only on this level do we get a non-trivial condition on $C$). Assume that $\hat{\Xi}^{n-1}$ is also a solution. Then we know that $\hat{\Xi}^n \equiv \hat{\Xi}^{n-1}\hat{\Xi}$ will be a solution iff $\hat{X}_a \hat{\Xi}^{n-1} [\hat{\Xi},\hat{X}^a]=0$. But from (\ref{XXi_nice}) we know that $\hat{\Xi}^{n-1}$ commutes with all $[\hat{\Xi},\hat{X}^a]$, so we can write
\be
\hat{X}_a \hat{\Xi}^{n-1} [\hat{\Xi},\hat{X}^a] = \hat{X}_a [\hat{\Xi},\hat{X}^a]\hat{\Xi}^{n-1} = 0 \ ,
\ee
where we used $\hat{X}_a [\hat{\Xi},\hat{X}^a] = 0$ because $\hat{\Xi}$ is a solution.
\end{proof}

So, we see that the {general  expression  (\ref{mslssln}) for the   classical solution} has a non-trivial exact quantum counterpart.  The general solution of the quantum scalar equation of motion (\ref{XfyXeqmfy}) {can be written in terms of $\hat{\Xi}_+$ and $\hat{\Xi}_-$, or equivalently,  $\hat \Xi=\hat{\Xi}_+$ and $\hat{\Xi}^\dagger$.  It is given} by
\be
\hat{\Phi} = \hat{F}(\hat{\Xi}) + \hat{G}(\hat{\Xi}^\dagger) \ ,
\label{xctslnfrPhi}\ee
where $\hat{F}$ and $\hat{G}$ are arbitrary elements of the polynomial algebras (or of their closures) generated by $\hat{\Xi}$ and $\hat{\Xi}^\dagger$, respectively.

Now we want to ask whether the analogous {exact  solutions can} be established for the quantum Dirac equation (\ref{ncDekwthm}). The main motivation is the classical result (\ref{mslsslnprm}), which could be interpreted as arbitrary spinorial solution written in terms of arbitrary scalar solutions. The natural quantization of (\ref{mslsslnprm}) is
\be\label{psi_exact}
\hat{\Psi}=\frac 1{2}\pmatrix{1-\hat{\Xi}&-1-\hat{\Xi}^\dagger\cr 1+\hat{\Xi}&-1+\hat{\Xi}^\dagger}
\pmatrix{\hat{F}(\hat{\Xi})\cr  \hat{G}(\hat{\Xi}^\dagger)} \ .
\ee
{We want to see if it is a solution to (\ref{ncDekwthm}).}
Due to the linearity {of (\ref{ncDekwthm})}, it is enough to consider the case $\hat{F}(\hat{\Xi}) \neq 0$ and $\hat{G}(\hat{\Xi}^\dagger) = 0$. Then the general case will  trivially follow. Also by the linearity, it suffices to work only with the scalar solutions given by monomials $\hat{F}_n := \hat{\Xi}^n$. So, we want to show that
\be\label{psin_exact}
\hat{\Psi}_n :=\frac 1{2}\pmatrix{(1-\hat{\Xi})\hat{F}_n\cr (1+\hat{\Xi})\hat{F}_n } \equiv \frac 1{2}\pmatrix{\hat{F}_n - \hat{F}_{n+1}\cr \hat{F}_n + \hat{F}_{n+1} }
\ee
is an exact solution of (\ref{ncDekwthm}). The following proposition will prove to be crucial in the demonstration
\begin{proposition}\label{prop4}
Let $\hat{\Psi}'_n :=\pmatrix{\hat{F}_n\cr \hat{F}_n } $, then (\ref{psin_exact}) is given by
\be\label{psipsi}
2\alpha n \hat{\Psi}_n = \tau_a [\hat{X}^a, \hat{{\Psi}}'_n] \ ,
\ee
where $\tau_a$ are defined after (\ref{cvrntcrltep}).
\end{proposition}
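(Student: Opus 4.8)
The plan is to verify the identity by a direct computation of the right-hand side $\tau_a[\hat{X}^a,\hat{\Psi}'_n]$ and to check that it reproduces $2\alpha n\,\hat{\Psi}_n$ as given in (\ref{psin_exact}). First I would observe that $\hat{\Psi}'_n=\hat{F}_n\,\pmatrix{1\cr 1}$ is a scalar element $\hat{F}_n=\hat{\Xi}^n$ of the algebra multiplying a \emph{constant} column vector. Since each $\hat{X}^a$ is a pure element of $\mathcal{A}$, it acts trivially on the spinor index, so the commutator collapses to $[\hat{X}^a,\hat{\Psi}'_n]=[\hat{X}^a,\hat{\Xi}^n]\,\pmatrix{1\cr 1}$. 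Moreover, the matrices $\tau_a$ and the algebra elements $[\hat{X}^a,\hat{\Xi}^n]$ occupy independent tensor factors, so they may be freely reordered: $\tau_a[\hat{X}^a,\hat{\Psi}'_n]=\sum_a[\hat{X}^a,\hat{\Xi}^n]\,\bigl(\tau_a\pmatrix{1\cr 1}\bigr)$.

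The key computational step is to evaluate $[\hat{X}^a,\hat{\Xi}^n]$ from the closed-form relations (\ref{XXi_nice}). The essential observation is that every right-hand side in (\ref{XXi_nice}) is itself a polynomial in $\hat{\Xi}$ (namely $\hat{\Xi}$, $\hat{\Xi}^2-1$, $\hat{\Xi}^2+1$) and therefore commutes with $\hat{\Xi}$. Consequently the telescoping expansion $[\hat{X}^a,\hat{\Xi}^n]=\sum_{k=0}^{n-1}\hat{\Xi}^k[\hat{X}^a,\hat{\Xi}]\,\hat{\Xi}^{\,n-1-k}$ collapses to $n\,\hat{\Xi}^{\,n-1}[\hat{X}^a,\hat{\Xi}]$. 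This immediately gives $[\hat{X}^1,\hat{\Xi}^n]=-i\alpha n\,\hat{\Xi}^n$, together with $[\hat{X}^0,\hat{\Xi}^n]=-\frac{\alpha n}{2}(\hat{\Xi}^{\,n+1}-\hat{\Xi}^{\,n-1})$ and $[\hat{X}^2,\hat{\Xi}^n]=-\frac{\alpha n}{2}(\hat{\Xi}^{\,n+1}+\hat{\Xi}^{\,n-1})$.

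It then remains to assemble the three channels. Using $(\tau_0,\tau_1,\tau_2)=(\sigma_3,i\sigma_1,i\sigma_2)$ from just after (\ref{cvrntcrltep}), one finds $\tau_0\pmatrix{1\cr 1}=\pmatrix{1\cr -1}$, $\tau_1\pmatrix{1\cr 1}=i\pmatrix{1\cr 1}$ and $\tau_2\pmatrix{1\cr 1}=\pmatrix{1\cr -1}$. The decisive feature is that the $a=0$ and $a=2$ contributions carry the same spinor $\pmatrix{1\cr -1}$, so that upon adding them the $\hat{\Xi}^{\,n-1}$ pieces cancel and only $-\alpha n\,\hat{\Xi}^{\,n+1}\pmatrix{1\cr -1}$ survives, while the $a=1$ channel contributes $\alpha n\,\hat{\Xi}^{n}\pmatrix{1\cr 1}$. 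Summing the two and reading off components yields $\alpha n\pmatrix{\hat{\Xi}^n-\hat{\Xi}^{\,n+1}\cr \hat{\Xi}^n+\hat{\Xi}^{\,n+1}}=\alpha n\pmatrix{\hat{F}_n-\hat{F}_{n+1}\cr \hat{F}_n+\hat{F}_{n+1}}=2\alpha n\,\hat{\Psi}_n$, which is exactly (\ref{psin_exact}).

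The calculation is essentially mechanical, and the only genuinely load-bearing points — where care is needed rather than where difficulty lies — are twofold. First, the right-hand sides of (\ref{XXi_nice}) are functions of $\hat{\Xi}$ alone, which is what lets the Leibniz sum collapse to a single $n\,\hat{\Xi}^{n-1}(\cdots)$ term and thereby produces the overall factor $n$ that appears on the left. Second, there is the cancellation of the $\hat{\Xi}^{\,n-1}$ terms between the $\tau_0$ and $\tau_2$ channels, which is precisely what guarantees that the result is proportional to $\hat{\Psi}_n$ with no spurious lower-degree admixture. If either of these failed, the identity would not hold in the clean form stated.
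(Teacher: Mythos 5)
Your proposal is correct and takes essentially the same approach as the paper: both reduce the claim to the generalized commutators $[\hat X^a,\hat\Xi^n]$ (the paper's (\ref{XXi_nicer})) and then combine the three terms so that the $\hat\Xi^{n-1}$ pieces from $\hat X^0$ and $\hat X^2$ cancel, leaving exactly $2\alpha n\,\hat\Psi_n$. The only differences are presentational: you justify (\ref{XXi_nicer}) with an explicit telescoping argument where the paper simply invokes recursion, and you sum channel-by-channel against the constant spinors $\tau_a\left({1 \atop 1}\right)$ whereas the paper first assembles the component combinations $[\,i\hat X^1\pm(\hat X^0+\hat X^2),\hat F_n]$.
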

\begin{proof}
From the definition of $\tau_a$ we have
\be\label{tauXPsi}
\tau_a [\hat{X}^a, \hat{{\Psi}}'_n] = \pmatrix{[\hat{X}^0 + i\hat{X}^1 +\hat{X}^2,\hat{F}_n]\cr [-\hat{X}^0 + i\hat{X}^1 -\hat{X}^2,\hat{F}_n]}\ .
\ee
Now, the relations (\ref{XXi_nice}) can be easily generalized (e.g. by recursion) to
\beqa\label{XXi_nicer}
&&[\hat{X}^1, \hat{\Xi}^n] = -in\alpha \hat{\Xi}^n \ ,\nonumber \\
&&[\hat{X}^0, \hat{\Xi}^n] = -n\frac{\alpha}{2} (\hat{\Xi}^{n+1} - \hat{\Xi}^{n-1}) \ ,\nonumber \\
&&[\hat{X}^2, \hat{\Xi}^n] = -n\frac{\alpha}{2} (\hat{\Xi}^{n+1} + \hat{\Xi}^{n-1}) \ .
\eeqa
From here we have
\be
[i\hat{X}^1 \pm(\hat{X}^0 + \hat{X}^2),\hat{\Xi}^n] = n\alpha (\hat{\Xi}^n \mp \hat{\Xi}^{n+1}) \ ,
\ee
which immediately proves the claim.
\end{proof}
Now we finally can prove the main result for the spinorial case.
\begin{proposition}\label{prop5}
The spinor (\ref{psin_exact}) (and, as a trivial consequence, the general one given in (\ref{psi_exact})) is an exact solution of the quantum Dirac equation (\ref{ncDekwthm}).
\end{proposition}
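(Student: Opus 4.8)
The plan is to reduce the spinorial statement to the scalar equation of motion---which is already under control by Proposition \ref{prop3}---using Proposition \ref{prop4} as the bridge. Write the operator of (\ref{ncDekwthm}) as the super-operator $\hat{\cal D}\hat\Psi:=\epsilon_{abc}\tau^a\hat X^b\hat\Psi\hat X^c$, so that the claim is $\hat{\cal D}\hat\Psi_n=0$. By the linearity of (\ref{ncDekwthm}) it is enough to treat the monomial spinors $\hat\Psi_n$, and the $\hat G$-sector of (\ref{psi_exact}) then follows by the parallel computation with $\hat\Xi\to\hat\Xi^\dagger$.

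The key move, valid for $n\ge 1$, is to feed Proposition \ref{prop4} directly into $\hat{\cal D}$. Since the spinor part of $\hat\Psi'_n$ is the constant column $(1,1)^{\rm T}$ times $\hat F_n$, and the matrices $\tau_a$ act only on the spinor index and therefore commute with the algebra elements $\hat X^b$, Proposition \ref{prop4} and the linearity of $\hat{\cal D}$ give $2\alpha n\,\hat{\cal D}\hat\Psi_n=\hat{\cal D}\bigl(\tau_d[\hat X^d,\hat\Psi'_n]\bigr)=\epsilon_{abc}\,(\tau^a\tau_d)(1,1)^{\rm T}\,\hat X^b[\hat X^d,\hat F_n]\hat X^c$, the constant spinor having been pulled through. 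I would then expand $\tau^a\tau_d=-i\epsilon^a{}_{de}\tau^e-\delta^a_d\BI$ using (\ref{towprptees}) and handle the two resulting pieces in turn.

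The $\delta^a_d$ piece leaves the purely algebraic quantity $T:=\epsilon_{dbc}\hat X^b[\hat X^d,\hat F_n]\hat X^c$. Expanding the inner commutator and using only the antisymmetry of $\epsilon$ together with the $su(1,1)$ relations (\ref{stpdids4X}), the two halves reduce to $+i\alpha\,\hat X_a\hat F_n\hat X^a$ and $-i\alpha\,\hat X_a\hat F_n\hat X^a$, so $T\equiv 0$ identically, with no appeal to the field equation. For the $\epsilon^a{}_{de}\tau^e$ piece I would contract the two Levi-Civita symbols with $\epsilon_{abc}\epsilon^a{}_{de}=-(\eta_{bd}\eta_{ce}-\eta_{be}\eta_{cd})$ (the overall sign encoding $\det\eta=-1$); the terms that survive are precisely $\bigl(\hat X_d[\hat X^d,\hat F_n]\bigr)\hat X_e$ and $\hat X_e\bigl([\hat X^d,\hat F_n]\hat X_d\bigr)$, and both inner factors vanish because, by Proposition \ref{prop3} in the equivalent form (\ref{equivalent}), $\hat F_n$ obeys $\hat X_a\hat F_n\hat X^a+\hat F_n=0$. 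This gives $\hat{\cal D}\hat\Psi_n=0$ for all $n\ge 1$.

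The anticipated difficulty is bookkeeping rather than conceptual: keeping the noncommutative ordering correct while collapsing the sandwiched triple products, and being scrupulous about the Lorentzian $\epsilon$-identities (index placement and the $\det\eta=-1$ sign), since a stray sign is exactly what would prevent the leftover terms from being recognized as the scalar equation of motion. One genuinely separate point is the constant mode $n=0$, where Proposition \ref{prop4} degenerates to $0=0$ and supplies no information; there $\hat{\cal D}\hat\Psi_0=0$ has to be checked by hand, substituting $\hat\Psi_0=\frac12\pmatrix{1-\hat\Xi\cr 1+\hat\Xi}$ and using the closed commutators (\ref{XXi_nice}) and the relations (\ref{Xtz}), whereupon all contributions cancel---consistent with the classical fact that a constant $f$ already solves the Dirac equation.
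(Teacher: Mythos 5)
Your proposal is correct and follows essentially the same route as the paper's own proof: use Proposition \ref{prop4} to trade $\hat\Psi_n$ for $\tau_d[\hat X^d,\hat\Psi'_n]$, expand the product of $\tau$'s via (\ref{towprptees}), contract the two Levi-Civita symbols, and observe that two of the resulting terms vanish by the scalar equation of motion in the form (\ref{equivalent}) while the remaining purely $\epsilon$-term vanishes identically by $\epsilon_{abc}\hat X^b\hat X^c=-i\alpha\hat X_a$ --- exactly the paper's three-term decomposition, merely written with the constant spinor $(1,1)^{\rm T}$ pulled out so that everything is phrased in terms of $\hat F_n$ rather than $\hat\Psi'_n$. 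Your index manipulations check out, including the sign in $\epsilon_{abc}\epsilon^{a}{}_{de}=-(\eta_{bd}\eta_{ce}-\eta_{be}\eta_{cd})$, which is the paper's identity after a cyclic relabeling. The one place where you go beyond the paper is the constant mode: you correctly note that for $n=0$ Proposition \ref{prop4} degenerates to $0=0$, so the reduction argument is vacuous there, and $\hat{\cal D}\hat\Psi_0=0$ must be verified directly. The paper's proof silently skips this case even though the general solution (\ref{psi_exact}) needs it (any polynomial $\hat F$ with a constant term); a direct computation using (\ref{XXi_nice}) and (\ref{Xtz}) confirms that $\hat\Psi_0=\frac12\pmatrix{1-\hat\Xi\cr 1+\hat\Xi}$ is indeed annihilated by (\ref{ncDekwthm}), so your flag is both warranted and resolvable exactly as you describe.
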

\begin{proof}
Using the result of Proposition \ref{prop4}, we need to prove that
\be\label{exact1}
\epsilon_{abc}\tau^c \hat{X}^a \tau_k [\hat{X}^k, \hat{{\Psi}}'_n] \hat{X}^b =0 \ .
\ee
Using ({\ref{towprptees}}) and
\be
\epsilon_{abc}\epsilon^c_{\ kl} = - (\eta_{ak}\eta_{bl}-\eta_{al}\eta_{bk})
\ee
we have
\be
\epsilon_{abc}\tau^c \hat{X}^a \tau_k [\hat{X}^k, \hat{{\Psi}}'_n] \hat{X}^b = i\tau_b \hat{X}_a [\hat{X}^a ,\hat{{\Psi}}'_n] \hat{X}^b - i\tau_a \hat{X}^a [\hat{X}^b ,\hat{{\Psi}}'_n] \hat{X}_b - \epsilon_{abk} \hat{X}^a [\hat{X}^k ,\hat{{\Psi}}'_n] \hat{X}^b \ .
\ee
Because each component of $\hat{{\Psi}}'_n$ is an exact solution for the scalar case, cf. (\ref{equivalent}), the first two terms are equal to zero. The remaining term we can manipulate as follows
\beqa
\epsilon_{abk} \hat{X}^a [\hat{X}^k ,\hat{{\Psi}}'_n] \hat{X}^b &=& \epsilon_{abk} \hat{X}^a \hat{X}^k \hat{{\Psi}}'_n \hat{X}^b - \epsilon_{abk} \hat{X}^a \hat{{\Psi}}'_n \hat{X}^k \hat{X}^b \nonumber \\
&=&i\alpha {X}_b \hat{{\Psi}}'_n \hat{X}^b- i\alpha {X}^a \hat{{\Psi}}'_n \hat{X}_a = 0 \ ,
\eeqa
where we used $\epsilon_{abc}\hat{X}^b \hat{X}^c =- i\alpha \hat{X}_a$.  This proves (\ref{exact1}).
\end{proof}

Consequently (\ref{psi_exact}) solves
 (\ref{ncDekwthm}).\footnote{An alternative proof is presented in Appendix B.  There we construct the quantum version $\hat U$ of the matrix $U$ in (\ref{Uoprtr}), and use it to map  the  Dirac operator to the analogue  of  the noncovariant basis.   The solutions for massless spinors in that basis are trivially obtained from the classical solutions (\ref{mslssln}) by replacing  $\zeta$  by $\Xi$, with a particular choice of ordering for the $\sqrt{\hat z}$ factor.  Upon  using $\hat U$ to map the solution back to the covariant basis, we recover  (\ref{psi_exact}).}
Therefore, the main conclusion of this section is that the exact solutions to the quantized massless Klein-Gordon (\ref{XfyXeqmfy}) and Dirac (\ref{ncDekwthm}) equations are given by essentially the same expressions as in the classical case with the non-trivial quantization of the variable $\z$ given by $\hat\Xi$.

\section{Exact boundary correlation functions}
\setcounter{equation}{0}

Boundary correlation functions are generated from the on-shell bulk action.  Before obtaining  the boundary two-point correlation functions for operators sourced by  massless scalars  and  spinors on  $EAdS_2$ on  quantized  $EAdS_2$, we  briefly recall how the correspondence works  on the classical  $EAdS_2$ bulk manifold in the first subsection.  We give the quantized version for scalars and spinors in  the following two subsections. As a result of the exact solutions of the previous section, our results are exact, and in fact, remarkably simple.

\subsection{Two-point correlators  on the boundary of  $EAdS_2$}
{We begin by reviewing the calculation of the two-point correlation function for operators sourced by the massless scalar  field on $EAdS_2$, and then for operators sourced by massless spinors.  For the latter, we use the unconventional choice of the covariant description of the Dirac operator.
 }

\subsubsection{Massless scalar field}

For the case of the  massless scalar field, one starts with the action
\beqa  S[\Phi]
&= &\frac 12\int_{ {\mathbb{R}}^2_+} dt dz\,\,\Bigl\{( \partial_z\Phi )^2 \,+\,(\partial_t\Phi)^2\Bigr\}\;.\label{clmsfa}
\eeqa
 Variations $\delta\Phi$ of $\Phi$ in  (\ref{clmsfa}) give
\beqa  \delta S[\Phi]
&= &-\int_{ {\mathbb{R}}^2_+} dt dz\,\delta\Phi\,( \partial_z^2+\partial_t^2)\Phi - \int_{ {\mathbb{R}}} dt \,(\partial_z\Phi\,\delta\Phi)\Big|_{z=0} \;.\label{forpt2}
\eeqa
Extremizing the action with Dirichlet boundary conditions yields the field equation (\ref{mslssclrfe}).
Since the equation is second order we should impose two boundary conditions to {fix the solution (\ref{sfmslssln}).} Solutions which are everywhere (and in particular at $z\rightarrow\infty$) regular can be expressed in terms of   the boundary value of the field
 $ \phi_0(t)$, using the boundary-to-bulk propagator.\cite{Witten:1998qj,Freedman:1998tz}
{Actually, in the free case, the full  boundary-to-bulk propagator is not needed for the calculation of the boundary two-point function.  The solution  (\ref{sfmslssln}) can be expanded in powers of $z$, as was done in (\ref{bhmslssln}), and  we only need to write the ${\cal O}(z) $ term, $\phi_1(t)$,   in terms of $\phi_0(t)$.}
For this we can use\footnote{This comes from the Cauchy formula.  For this define a (clockwise)  closed contour $C$ in the complex plane (parametrized by $\zeta'=z'+it'$) which goes from  $t'=-R$  to $t'=R$ along the $t'$ axis and then returns to $t'=-R$ along the semi-circle $\zeta'=Re^{i\phi}$, where $\phi$ runs from $\frac \pi 2$ to $-\frac \pi 2$.  Then
$$ f(\zeta)=-\frac{1}{2\pi i}\oint_C\frac{d\zeta'f(\zeta')}{\zeta'-\zeta}\;,\quad\quad g(\bar\zeta)=\frac{1}{2\pi i}\oint_C\frac{d\bar\zeta'g(\bar\zeta')}{\bar\zeta'-\bar\zeta}\;.
$$
Upon assuming that $f(Re^{i\phi})$ and  $g(Re^{-i\phi})$ vanish in the limit $R\rightarrow \infty$, we get that
$$   f(\zeta)=\frac{1}{2\pi }\int_{-\infty}^\infty dt'\frac{f(it')}{\zeta-it'}\;,\quad\quad g(\bar\zeta)=\frac{1}{2\pi }\int_{-\infty}^\infty dt'\frac{g(-it')}{\bar\zeta+it'}\;.
$$
(\ref{usngCf}) follows from setting $z=0$.}
\beqa f(it) =\frac 1{2\pi i}\int dt' \,\frac {f(it')}{t-t'} \,,&\quad&g(-it) =-\frac 1{2\pi i}\int dt' \,\frac {g(-it')}{t-t'}\,, \label{usngCf}\eeqa
and consequently
\beqa f'(it) =\frac 1{2\pi}\int dt'\, \frac {f(it')}{(t-t')^2}\;,
&\quad&g'(-it) =\frac 1{2\pi}\int dt'\,  \frac {g(-it')}{(t-t')^2}\;.\label{frtnf}\eeqa
From  (\ref{bhmslssln}) we then get  the limiting value of $\partial_z\Phi$
\be \partial_z\Phi\rightarrow \phi_1(t)=\frac 1{2\pi}\int dt'\,  \frac {\phi_0(t')}{(t-t')^2}\;,\qquad{\rm as}\;\;z\rightarrow 0\;.\label{phiwonfrtnf}\ee
We next substitute this back in the action (\ref{clmsfa}), which can be re-written as
 \beqa  S[\Phi]
&= &-\frac 12\int_{ {\mathbb{R}}^2_+} dt dz\,\Phi( \partial_z^2+\partial_t^2)\Phi\;-\frac 12\int_{ {\mathbb{R}}} dt \,(\Phi\partial_z\Phi)\Big|_{z=0}\;.\label{clmsfasi2}
\eeqa
Only the boundary term survives in the on-shell action.  Using (\ref{phiwonfrtnf}), the result is
\be S[\Phi]|_{\rm on-shell}=-\frac{1}{2\pi}\int_{ {\mathbb{R}}} dt \int_{ {\mathbb{R}}} dt'\,\frac{\phi_0(t)\phi_0(t') }{(t-t')^2}\;\;.
\ee
In the  $AdS/CFT$ correspondence one identifies $ S[\Phi|_{\rm on-shell}$ with the generating functional of the
$n-$point connected correlation functions for the  operator ${\cal O}$ associated with $\phi_0$.  Here, both ${\cal O}$ and  $\phi_0$ are  functions of  only $t$,
 \be <{\cal O}(t_1)\cdots{\cal O}(t_n)>=\frac{\delta^n S[\Phi]|_{\rm on-shell}}{\delta \phi_0(t_1)\cdots\delta \phi_0(t_n)}\bigg|_{\phi_0=0}\label{adscft} \ .\ee
 So the two-point function in this example is
\be <{\cal O}(t){\cal O}(t ')>\;=\;-\frac{1}{\pi}\,\frac{1 }{(t-t')^2}\label{2ptfncmtv} \ .\ee

\subsubsection{Massless spinor field}

Since we presented two different formulations for  spinors
on the classical $EAdS_2$ bulk manifold, we can proceed with two different derivations of the corresponding two-point correlator on the boundary.  These two approaches cannot be easily mapped from one to another because the map  (\ref{tmbtwnDs}) between the Dirac operators  is singular at the $z\rightarrow 0$ boundary.  Nevertheless, it can be checked that both approaches yield the same result for the boundary correlation function.  Below we work with the Dirac operator $D$ (\ref{Dpryme}) of the covariant description.

The bulk action  for massless spinors in the covariant description is given by
\be { S}_B[\overline{\Psi},\Psi]\;=\;-\int_{ {\mathbb{R}}^2_+} \frac {dzdt}{z^2}\,\overline{\Psi} D \Psi\;= \; \int_{ {\mathbb{R}}^2_+} \frac {dzdt}{z^2}\,\overline{\Psi} ( i\tau_a{\cal K}^a+\BI) \Psi \;,\label{trnsfrmddctn}\ee
	where the conjugate spinor is defined by $\overline{\Psi}={\Psi}^\dagger\sigma_3$.  With this definition one gets that $\overline{\Psi }\Psi $ is a scalar under $SU(1,1)$ transformations, $\Psi \rightarrow M\Psi  $,  where the $2\times 2$ transformation matrix satisfies $M^\dagger \sigma_3 M=\sigma_3$.\footnote{Another standard choice is $M^\dagger \sigma_2 M=\sigma_2$. In this case we should define $\overline{\Psi}={\Psi}^\dagger\sigma_2$.  This choice would lead to the appearance of    $\sigma_2$ instead of $\sigma_3$  in  the formula for the two-point correlation function below,  (\ref{wontoo5}).   }  {The action obviously leads to the field equation $D\Psi=0$, and its hermitean conjugate $\overline{\Psi}\overleftarrow{D}=0$.}

As is well known,\cite{Henningson:1998cd,Henneaux:1998ch} the action for spinor fields should be supplemented with a boundary term. {From \cite{Henneaux:1998ch}, because the field equation is first order in derivatives, we cannot fix all components of the spinor at spatial infinity (corresponding to $z\rightarrow+0$), as well as demand that the fields are everywhere regular in the bulk.  Therefore some spinor degrees of freedom at spatial infinity remain dynamical.  The   variations of the  action with regards to these boundary degrees of freedom, as well as the bulk degrees of freedom, should be consistently made to vanish.   This is not possible if the action  consists only of (\ref{trnsfrmddctn}).   It is for this reason that  boundary term should be included in the total action.   Note, that  this  situation differs for the case of a scalar field, where $\delta\Phi$ could be consistently subjected to the Dirichlet boundary conditions in (\ref{forpt2}).}

 To obtain the boundary term consider an arbitrary
 variation in  $\Psi$  in the action (\ref{trnsfrmddctn})
\beqa \delta{ S}_B[\overline{\Psi},\Psi]&=&-\int_{ {\mathbb{R}}^2_+} \frac {dzdt}{z^2}\,\overline{\Psi} D\delta \Psi\cr&&\cr&=&\Sigma_B-\lim_{z\rightarrow 0}\int_{ {\mathbb{R}}} {dt}\,\overline{\Psi}\Upsilon\delta \Psi\;,\label{thsstbvtn}\eeqa
where   $\Sigma_B$ are the terms that vanish on-shell, and $\Upsilon$ is defined by
\be \Upsilon=-\frac {i}{z}\pmatrix {t &t+i\cr -t+i& -t}\label{Upsilon}\;. \ee {The boundary contribution in  (\ref{thsstbvtn}) resulted from $\partial_z$ terms in (\ref{Dprm}).
$\Upsilon $ can also be expressed in terms of the imbedding coordinates $X^a$ and the chirality operator $\gamma$
\be \Upsilon=\epsilon_{abc} X^{a}\partial_tX^{b}\,\gamma\tau^c\;.\label{Upsilon} \ee
The asymptotic form of the solution  for $\Psi$ was given in (\ref{cantoofrthree}), while for
  $\bar\Psi$ we have
\beqa
 \overline{\Psi}(\zeta,\bar\zeta)&= & \overline{\psi_0}(t) (\sigma_3 + i t\BI )+ i  z \partial_t\Bigl(\,\overline{\psi_1}(t) (\sigma_3 + i t\BI )\Bigr)+{\cal O}(z^2)\;,\label{nxt2ldngrdr}\eeqa
where  $\psi_0$ and  $\psi_1$ were  defined in (\ref{psi01mns}). { We shall regard  $\psi_0(t)$  (and $\overline{\psi_0}(t)$) as non-dynamical, i.e.,  source fields, and so variations in (\ref{thsstbvtn}) are to be carried out with respect to the remaining boundary fields $\psi_1(t)$.}
The boundary term in  (\ref{thsstbvtn}) is removed upon adding the following  term to the bulk action (\ref{trnsfrmddctn})
  \beqa { S}_{\partial B}[\overline{\Psi},\Psi]&=&\lim_{z\rightarrow 0}\int_{\partial B} {dt}\, \overline{\psi_0}(t) (\sigma_3 + i t\BI )\Upsilon\Psi\;.\label{trnbndactn}\eeqa
The total action is then $S_{\rm total}= S_B+S_{\partial B}$.

 Only the boundary term (\ref{trnbndactn}) survives in the on-shell action.  To evaluate it we can  rewrite the asymptotic expression for the solution  (\ref{cantoofrthree})  using  (\ref{usngCf}) and (\ref{frtnf}):
\be \Psi(\zeta,\bar\zeta)\;=\;\frac 1{2\pi}\int dt'\,\Biggl(-\frac i{t-t'}(\sigma_3-it\BI)\psi_1(t')\,+\, \frac z{(t-t')^2}(\sigma_3-it'\BI)\psi_0(t')\,+\,{\cal O}(z^2)\Biggr)\;.\label{fesvnn0to}\ee
 So evaluating  $S_{\rm total}$ on-shell gives
\beqa  S_{\rm total}[\bar\Psi,\Psi]|_{\tt on-shell} &=&
\int {dt dt'}\,\frac z{2\pi(t-t')^2}\, \overline{\psi_0}(t) (\sigma_3 + i t\BI ) \Upsilon  (\sigma_3-it'\BI)
\psi_0(t')\cr&&\label{inthsznt}\eeqa
{The $z$ independent term in the integrand  of (\ref{fesvnn0to}) does not contribute to the on-shell action due to the result that $
\overline{\psi_0}(t) (\sigma_3 + i t\BI ) \Upsilon  (\sigma_3-it\BI)
\psi_1(t')$ is identically zero.
Notice  also that the factor $z$ in (\ref{inthsznt})} cancels with the $1/z$ appearing in $\Upsilon$ (\ref{Upsilon}).
Finally, use the identity
\be \overline{\psi_0}(t) (\sigma_3 + i t\BI ) \Upsilon  (\sigma_3-it'\BI)
\psi_0(t') =-\frac i z(t-t') \,\overline{\psi_0}(t) \sigma_3
\psi_0(t')\;,\label{idtinvUps}\ee
to simplify the result to
\beqa  S_{\rm total}[\bar\Psi,\Psi]|_{\tt on-shell} &=&-\frac i{2\pi}\int_{ {\mathbb{R}}} dt \int_{ {\mathbb{R}}} dt'\,
\overline{\psi_0}(t) \frac {\sigma_3}{t-t'}
\psi_0(t')\;.\eeqa
The resulting two point function correlation function on the boundary is
\be <{\cal O}_{{}_{\overline{\psi_0}}}\,(t){\cal O}_{\psi_0}(t')>\;=\;\frac{\delta^2  S_{\rm total}[\bar\Psi,\Psi]|_{\tt on-shell}}{\delta\overline{ \psi_0}(t)\delta \psi_0(t')}\bigg|_{\overline{\psi_0}=\psi_0=0}\; =\;\frac {1}{2\pi i}\,\frac{ \sigma_3}{t-t'}\;.\label{wontoo5}\ee

\subsection{Scalar two-point correlator on the boundary of  quantized $EAdS_2$}

The non-commutative generalization of the action (\ref{clmsfa})  is
\be\label{mslssclrfldactn}
\hat S[\hat \Phi]=-\frac{1}{2}{\rm Tr}\,\Bigl\{[\hat X^\mu,\hat \Phi][\hat X_\mu,\hat \Phi]\Bigr\}\ ,
\ee
where  Tr denotes a trace. The field equation (\ref{XfyXeqmfy}) follows from extremizing the action with respect to variations in $\hat \Phi$.

Here it is convenient to introduce the star product realization of the operator product.  The Moyal-Weyl star product can be employed for this purpose, provided one transforms to an appropriate  (canonical) pair of coordinates $(x,y)$,
as was discussed  in \cite{Pinzul:2017wch,deAlmeida:2019awj}.  From (\ref{zinvtcnpr}), that canonical pair can be taken to be
 $z^{-1}$ and $t$.  As  $(z^{-1},t)$ span ${\mathbb{R}}^{2}_+$, appropriate boundary conditions should be imposed at $z^{-1}=0$ (which recall is {\it not} the asymptotic boundary of $EAdS_2$), i.e., all functions should vanish sufficiently rapidly as $z\rightarrow\infty$.  Alternatively, one can employ a pair of canonical coordinates, denoted by $(x,y)$, that  span the entire plane, as was done in \cite{Pinzul:2017wch,deAlmeida:2019awj}.
Then the map back to Fefferman Graham coordinates is
\be\label{FGsymbol}
\begin{array}{l}
  t= y e^{-x}\\
  z= e^{-x}
\end{array}
\ .
\ee
In the quantum theory $x$ and $y$ get promoted to operators $\hat x$ and $\hat y$, satisfying the standard canonical commutation relation
\be [\hat x,\hat y]=i\alpha\BI\;\label{hrccrs}\ee
The quantization of the map (\ref{FGsymbol}) is obtained using symmetric ordering:
\be\label{FGNC}
\begin{array}{l}
  \hat t= \frac{1}{2}(\hat y e^{-\hat x}+e^{-\hat x}\hat y)\\
  \hat z= e^{-\hat x}
\end{array}
\;.\ee
{Then from (\ref{hrccrs}) and (\ref{FGsymbol}) we recover  the fundamental commutation relations for $\hat z$ and $\hat t$ (\ref{xt}).}

The product of operators $\hat {\cal F}$ and $\hat {\cal G}$ can be expressed in terms of the Moyal-Weyl star product of their symbols (which we also denote by $\hat {\cal F}$ and $\hat {\cal G}$, respectively), { which when expressed  in terms of canonical variables $(x,y)$ [or $(z^{-1},t)$],} is defined in the standard way
\be\label{dffstrprd}
[\hat {\cal F}\star \hat {\cal G} ](x,y) = \hat {\cal F}(x,y)\,\exp{\Bigl\{\,\frac{i\alpha}2 \,(\overleftarrow{ { \partial_ x }}\,
\overrightarrow{ {\partial_ y }}\,-\,\overleftarrow{ { \partial_ y }}\,
\overrightarrow{ {\partial_ x }}
)\,
\Bigr\}}\;\hat {\cal G}(x,y)\ .
\ee
Furthermore, the trace on the algebra, Tr, becomes $ \frac 1{\alpha^2} \int_{ {\mathbb{R}}^2} dxdy$.
Then the operator equations (\ref{FGNC}) are replaced by the corresponding equations for the for the symbols
 $\hat t$ and $\hat z$
\be
\begin{array}{l}
  \hat t= \frac{1}{2}(\hat y\star e^{-\hat x}+e^{-\hat x}\star\hat y)\\
  \hat z= e^{-\hat x}
\end{array}
\ ,
\ee
and we recover  (\ref{xt}), where $[\hat {\cal F},\hat {\cal G}]$ now denotes a star commutator $\hat {\cal F}\star\hat {\cal G}-\hat {\cal G}\star\hat {\cal F}.$
The star commutator   goes to $i\alpha$ times the corresponding Poisson bracket in  the commutative limit $\alpha\rightarrow 0$, (\ref{clofopcmtr}).   From (\ref{xt}) it follows that  given any  two functions $\hat {\cal F}$ and $\hat {\cal G}$ that are well-behaved when $z\rightarrow 0$, their star commutator  must vanish in the boundary limit, with the leading order term in $z$ agreeing with the commutative limit,
\be [\hat{\cal F},\hat{\cal G}]\rightarrow i\alpha z^2 \Bigl(\partial_t\hat{\cal F}\partial_z\hat{\cal G}-\partial_t\hat{\cal G}\partial_z\hat{\cal F}\Bigr)=i\alpha \{\hat{\cal F},\hat{\cal G}\}\qquad {\rm as}\;z\rightarrow 0\;. \label{Ctriblt}\ee
  Moreover, the star product of any two functions with a well-behaved $z\rightarrow 0$ limit reduces to the point-wise product in the boundary limit.

The expressions for the noncommutative analogues $\hat X^a$ of the embedding coordinates in term of $\hat t$ and $\hat z$, given in (\ref{Xtz}), can  be realized in terms of the star product. From them we  recover (\ref{stpdids4X}) and (\ref{sclstpdids4X}), the latter now written as
$ \hat X^a\star\hat X_{a}=-\BI\;$.
Upon  expanding  (\ref{Xtz})  in $\alpha$, it can be shown that embedding coordinates $ X^0$  and $ X^2$ only pick up a second order correction in $\alpha $  upon quantization (in fact, the  corrections to  $ X^0$  and $ X^2$ are identical), while $X^1$ is undeformed
 \be  \hat X^0=X^0-\frac {\alpha^2}8 z\;,\quad\quad  \hat X^1=X^1\;,\quad\quad \hat X^2=X^2-\frac {\alpha^2}8 z \;. \ee
We see that the quantum corrections vanish in the boundary limit $z\rightarrow 0$.  Recall that the quantum analogues  $\hat {\cal K}^a$ of Killing vectors $ {\cal K}^a$ were constructed from commutators (or star commutators) with $\hat X_a$ (\ref{qanfKls}).  Since $\hat X^a$ reduces to embedding coordinates $X^a$ in the boundary limit, and  the star commutator goes to  $i\alpha$ times the corresponding Poisson bracket in the limit (\ref{Ctriblt}), it follows that  $\hat {\cal K}^a$ must reduce to the classical Killing vectors $ {\cal K}^a$ (\ref{KaFa})  on  the boundary.  This led to the conclusion in \cite{Pinzul:2017wch, deAlmeida:2019awj} that quantized $EAdS_2$ is an asymptotically anti-de Sitter space.

From the above we can write  the  (\ref{mslssclrfldactn}) according to
\be
\hat S[\hat \Phi]=-\frac 1{2\alpha^2} \int_{ {\mathbb{R}}^2_+} \frac{dtdz}{z^2}\,\Bigl\{[\hat X^\mu,\hat \Phi]\star[\hat X_\mu,\hat \Phi]\Bigr\}\ ,
\ee
 where again $[\,,\,]$  now denotes a star commutator.  The action can be expressed as a sum of two terms, where one term vanishes on-shell, and the other term, which we denote by  $\hat S_{\partial D}[\hat \Phi]$, is only defined on the asymptotic boundary.
As was argued in \cite{Pinzul:2017wch,deAlmeida:2019awj}, the expression for the boundary action  is  identical in form to that appearing in the commutative theory (\ref{clmsfasi2}),
\be \hat S_{\partial D}[\hat \Phi]=-\frac12\lim_{z\rightarrow 0}\int dt \,\hat \Phi\partial_z  \hat\Phi\;.\label{ncbndactfii}
\ee

{The exact solution to the quantized scalar field equation is given by (\ref{xctslnfrPhi}).  Near  the boundary $z=0$ the solution behaves as
 $$ \hat\Phi(\hat\Xi,\hat\Xi^\dagger)=\phi_0(t)+\kappa(\alpha) z\, \phi_1(t) +{\cal O}(z^2)\;,$$
\be \phi_0(t)={\hat F(it)+ \hat G(-it)}\;\qquad\; \phi_1(t)=\Bigl({\hat F'(\hat\Xi)+ \hat G'(\hat\Xi^\dagger)}\Bigr)|_{z=0}\;,\label{bhmslsslnnc}\ee
the prime denoting a derivative.  Here we used $ z\, \phi_1(t)=
 z\star \phi_1(t)+{\cal O}(z^2)=\phi_1(t)\star  z+{\cal O}(z^2)$.
So, like with the classical solution, $ \hat\Phi$
 tends to $\phi_0(t)$ as $z\rightarrow 0$, while $\partial_z\hat \Phi(z,t)$ picks up a factor of $\kappa(\alpha)$:}
 \beqa \partial_z\hat \Phi(z,t)&\rightarrow&\kappa(\alpha)\, \partial_z\Phi(z,t)|_{z\rightarrow 0}\cr&&\cr
&\rightarrow&\,\frac {\kappa(\alpha)}\pi \int dt'\frac{\phi_0(t')}{(t-t')^2}
\;,\label{drvlFeyez0}\eeqa
where we used (\ref{phiwonfrtnf}).
Upon substituting into (\ref{ncbndactfii}), the result for the on-shell action is
\be S[\Phi]|_{\rm on-shell}=-\frac{{\kappa(\alpha)}}{2\pi}\int_{ {\mathbb{R}}} dt \int_{ {\mathbb{R}}} dt'\,\frac{\phi_0(t)\phi_0(t') }{(t-t')^2}\;\;.
\ee
 So the boundary two-point function for massless scalar fields on quantized $EAdS_2$ picks up an overall factor of $\kappa(\alpha)$
\be <{\cal O}(t){\cal O}(t ')>\;=\;-\frac{{\kappa(\alpha)}}{\pi}\,\frac{1 }{(t-t')^2}\label{2ptfncmtv} \ .\ee
This agrees up to ${\cal O}(\alpha^4)$ with the leading order perturbative result found  in \cite{Pinzul:2017wch,deAlmeida:2019awj}.

\subsection{Spinor two-point correlator on the boundary of  quantized $EAdS_2$}

{As in the previous subsection we can utilize the star product realization of the operator product to write down the equations of motion and the action.  Concerning the latter, }  the  bulk action for a massless spinor on quantized $EAdS_2$  can be taken to be
 \beqa &&\hat{ S}_B[\overline{\hat \Psi},\hat \Psi]=-\int_B \frac {dzdt}{z^2}\,\overline{\hat \Psi}\star \hat D \hat \Psi\cr&&\cr&&\;=- \frac i\alpha\int_B \frac {dzdt}{z^2}\,\epsilon_{abc} \overline{\hat\Psi}\star\hat\gamma\Bigl(\tau^c \hat X ^{b}\star \hat\Psi \star \hat X ^{a}\Bigr)  \cr&&\cr &&\;= - \frac i{\alpha\sqrt{1+\frac{\alpha^2}4}} \int_B \frac {dzdt}{z^2}\epsilon_{abc}\overline{\hat \Psi}\star\Bigl(\tau_d\tau^c \hat X ^{b}\star \hat \Psi \star \hat X ^{a}\star\hat X ^{d} -\frac \alpha 2\tau^c \hat X ^{b}\star \hat \Psi \star \hat X ^{a}  \Bigr)\;.\cr&&\label{ncblkactn}
 \eeqa
Like in the commutative case, the conjugate spinor is defined by $\overline{\hat \Psi}={\hat \Psi}^\dagger\sigma_3$.
Variations of $\overline{\hat \Psi} $ in the action lead to the equations of motion in the bulk (\ref{ncDekwthm}).
Using the cyclic property of the Moyal-Weyl star product, $\int dx dy\, {\cal F}\star{\cal G}\star{\cal H}=\int dx dy\, \hat{\cal H}\star\hat{\cal F}\star\hat{\cal G}\,+\,$boundary terms,   the  action (\ref{ncDekwthm}) can also be written as
$$ \hat{ S}_B[\overline{\hat \Psi},\hat \Psi]\;=\;- \, \frac i{\alpha\sqrt{1+\frac{\alpha^2}4}} \int_B \frac {dzdt}{z^2}\epsilon_{abc}\,\hat X^{a}\star \Bigl(\hat X^{d}\star\overline{\hat\Psi}\tau_d   -\frac \alpha 2 \overline{\hat\Psi}\Bigr)\star\hat X^{b}\tau^c  \star\hat\Psi\,+\,B.T. $$
\vspace{-2mm}
 $$
\;=  \frac i{\alpha\sqrt{1+\frac{\alpha^2}4}}\int_B \frac {dzdt}{z^2} {\epsilon_{abc} }\biggl(\hat X^{d}\star \hat X^{a}\star\overline{\hat \Psi} \star\hat X^{b} \tau^c\tau_d-\frac \alpha 2   \hat X^{a}\star \overline{\hat \Psi}\star \hat X^{b}\tau^c \biggr) \star \hat\Psi+B.T$$
\be
= \frac i\alpha\int_B \frac {dzdt}{z^2} \, {\epsilon_{abc} }\hat\gamma\Bigl(\hat X^{a}\star\overline{\hat\Psi} \star\hat X^{b} \tau^c\Bigr) \star\hat \Psi\,+\,B.T.\;,\qquad\qquad\qquad\qquad\qquad\qquad
 \ee
where $B.T.$ denotes boundary terms, and we used the identity
\be
\epsilon_{abc} \hat X^{a}\star\hat X^{d}\star\overline{\hat\Psi} \star \hat X^{b}\tau_d  \tau^c  =-\epsilon_{abc}\hat X^{d}\star\hat X^{a}\star\overline{\hat\Psi} \star \hat X^{b} \tau^c\tau_d+\alpha\epsilon_{abc}\hat X^{a}\star\overline{\hat\Psi} \star \hat X^{b} \tau^c\;.\label{dfcltid} \ee
Then variations of $\hat{\Psi} $ in the action lead to the equations of motion \be   {\epsilon_{abc} } \hat X^{a}\star\overline{\hat\Psi} \star \hat X^{b} \tau^c=0\label{fratefe}\;.\ee
This is the hermitean conjugate of the Dirac equation (\ref{ncDekwthm}). To show this we can use the identity
$\tau_a^\dagger=\sigma_3\tau_a\sigma_3$.

The exact solution to the Dirac equation (\ref{ncDekwthm}) { is given by (\ref{psi_exact}).}
As before, we only need the leading and next to leading behavior as $z\rightarrow 0$  to obtain the two-point correlator on the boundary.
{The expansion of (\ref{psi_exact}) near $z=0$ is
\beqa\hat \Psi(\hat\Xi,\hat \Xi^\dagger)&= & \frac{1}{2}
\left(
\begin{array}{c}
 1-i t \\1+ i t \\
\end{array}
\right) (\hat F(i t)-\hat G(-i t))\cr&&\cr&+&\frac{\kappa(\alpha)z}{2} \left(
\begin{array}{c}
-\hat F(i t)-\hat G(-i
   t)+(1-i t) \left(\hat F'(i t)-\hat G'(-i t)\right)\\
\;\;\,\hat F(i t)+\hat G(-i t)+(1+i
   t) \left(\hat F'(i t)-\hat G'(-i t)\right)\\
\end{array}
\right)+O\left(z^2\right)\,.\cr&& \label{nxt2ldngrdr}\eeqa
}
 It, along with its hermitean conjugate, can be re-expressed as
\beqa \hat \Psi(\hat\Xi,\hat \Xi^\dagger)&=&
 (\sigma_3-i t\BI )\,\psi_0(t)
- i  z\kappa(\alpha)\,\partial_t\Bigl(  (\sigma_3-i t\BI )\,\psi_1(t)\Bigr)+{\cal O}(z^2)\;,\cr &&\cr
 \overline{\hat\Psi}(\hat\Xi,\hat \Xi^\dagger)&= & \overline{\psi_0}(t) (\sigma_3 + i t\BI )+ i  z\kappa(\alpha)\, \partial_t\Bigl(\,\overline{\psi_1}(t) (\sigma_3 + i t\BI )\Bigr)+{\cal O}(z^2)\label{nxt2ldngrdr}\;,\label{toofrthree}\eeqa
where  $\psi_0$ and  $\psi_1$ are the quantum analogues of (\ref{psi01mns}).

As was true in the commutative case, the action should be supplemented with a boundary term so that variations of the total action can be consistently made to vanish.
A general variation in  $\hat\Psi$ induces the following change in the bulk action (\ref{ncblkactn}):
 \beqa \delta\hat{ S}_B&=& - \frac i\alpha\frac 1{\sqrt{1+\frac{\alpha^2}4}} \int_B \frac {dzdt}{z^2}\epsilon_{abc} \overline{\hat\Psi}\star\Bigl(\tau_d\tau^c \hat X^{b}\star \delta\hat \Psi \star \hat X^{a}\star\hat  X^{d} -\frac \alpha 2\tau^c \hat X^{b}\star\delta \hat\Psi \star \hat X^{a}  \Bigr)\cr&&\cr &=&- \frac i\alpha\,\frac {1}{\sqrt{1+\frac{\alpha^2}4}} \int_B \frac {dzdt}{z^2}\epsilon_{abc}\hat X^{a}\star \Bigl(\hat  X^{d}\star\overline{\hat \Psi}\tau_d   -\frac \alpha 2 \overline{\hat\Psi}\Bigr)\star\hat X^{b}\tau^c  \star\delta\hat\Psi\; +\; b.\,v.\;,\cr&&\label{scndln482}\eeqa
where the boundary variation is
\be b.\,v. =-  \frac i\alpha\frac 1{\sqrt{1+\frac{\alpha^2}4}} \int_B \frac {dzdt}{z^2}\epsilon_{abc} \Biggl(\Bigl[ \overline{\hat\Psi}\star\tau_d\tau^c \hat X^{b}\star \delta\hat\Psi\, ,\, \hat X^{a}\star\hat  X^{d}\Bigr]-\frac \alpha 2\Bigl[ \overline{\hat\Psi}\star\tau^c \hat X^{b}\star\delta \hat\Psi \,,\, \hat X^{a} \Bigr]\Biggr)\;. \label{dltsptlB}\ee
{The boundary term results from the integral of a star commutator, and as shown in  appendix A, such an integral is given explicitly by}
\be  \int_{D}\frac{ dzdt}{z^2}\, [\hat {\cal F},\hat {\cal G}](z,t)=-i \alpha\int{dt}\,(\partial_t \hat {\cal F}\,\hat{\cal G})|_{z=0}\;,\label{smplrslt} \ee
for any two functions $\hat {\cal F}$ and $\hat {\cal G}$ on quantized $EAdS_2$.
  Applying this to (\ref{dltsptlB}) gives
\be  b.\,v. =- \frac 1{\sqrt{1+\frac{\alpha^2}4}} \int dt\,\epsilon_{abc} \Biggl((\hat X^{a}\star\hat  X^{d})\partial_t\Bigl( \overline{\hat\Psi}\star \hat X^{b}\star\tau_d\tau^c\delta \hat\Psi\Bigr)-\frac \alpha 2\hat X^{a} \,\partial_t\Bigl( \overline{\hat\Psi}\star \hat X^{b}\star\tau^c\delta \hat\Psi\Bigr)\Biggr)\Bigg|_{z=0} \;. \ee
Using the result that  the star product reduces to the pointwise product in the boundary limit, and  that $\hat X^a$ go to the embedding coordinates $X^a$ in the boundary limit,  this simplifies to {
\beqa b.\,v.& =&- \frac 1{\sqrt{1+\frac{\alpha^2}4}} \int dt\,\epsilon_{abc} X^{a}\partial_tX^{b}\;\overline{\hat\Psi} \Bigl(X^{d} \tau_d-\frac \alpha 2\Bigr)\tau^c\delta\hat \Psi\Big|_{z=0}\cr&&\cr & =&- \frac 1{\sqrt{1+\frac{\alpha^2}4}} \int dt\,\overline{\hat\Psi} \Bigl(\BI-\frac \alpha 2\gamma\Bigr)\Upsilon\delta\hat \Psi\Big|_{z=0}\label{ncbndvr}\;, \eeqa
where $\gamma$ is the chirality operator on the classical manifold (\ref{cvrntcrltep}), and $ \Upsilon$ was defined in (\ref{Upsilon}). }
We see that the boundary term reduces to its commutative counterpart (\ref{thsstbvtn}) in the limit $\alpha\rightarrow 0$.

As in section 4.1.2, we  need to add a term to the action that will cancel the boundary variation (\ref{ncbndvr}).  For this we shall again assume that  $ {\psi_0}$ is the source field.  Then  (\ref{ncbndvr}) is eliminated upon adding
\beqa \hat{S}_{\partial B}[\overline{\hat \Psi},\hat \Psi]&= &\frac 1{\sqrt{1+\frac{\alpha^2}4}} \int dt\,\,\overline{\psi_0}(t) (\sigma_3 + i t\BI )\Bigl(\BI-\frac \alpha 2\gamma\Bigr)\Upsilon\hat\Psi\Big|_{z=0}
 \label{prencbndrem} \eeqa
to the bulk action (\ref{ncblkactn}). Upon using the identity,
\be  \overline{\psi_0}(t) (\sigma_3 + i t\BI )\gamma\Upsilon=- \overline{\psi_0}(t) \;,\ee
(\ref{prencbndrem}) simplifies to
\beqa \hat{ S}_{\partial B}[\overline{\hat \Psi},\hat \Psi]&= &\frac 1{\sqrt{1+\frac{\alpha^2}4}} \int dt\,\overline{\psi_0}(t)\biggl( (\sigma_3 + i t\BI )\Upsilon+\frac \alpha 2\biggr)\hat\Psi\Big|_{z=0}\;.
 \label{ncbndrem} \eeqa

 Once again only the boundary term survives in the total  action  $\hat S_{\rm total}= \hat S_B+\hat S_{\partial B}$ when evaluating it  on-shell.
 To evaluate it we can  rewrite the asymptotic expression for the solution for $\hat \Psi$ (\ref{nxt2ldngrdr}) using  (\ref{usngCf}) and (\ref{frtnf}):
\be \hat \Psi(\hat\Xi,\hat\Xi^\dagger)\;=\;\frac 1{2\pi}\int dt'\,\Biggl(-\frac i{t-t'}(\sigma_3-it\BI)\psi_1(t')\,+\, \frac{\kappa(\alpha) z}{(t-t')^2}(\sigma_3-it'\BI)\psi_0(t')\,+\,{\cal O}(z^2)\Biggr)\;.\label{thfesvnn0to}\ee
Substituting this result into  $\hat S_{\partial B}$, and regarding $\psi_0$ as the source field, we get
\beqa \hat  S_{\rm total}[\overline{\hat\Psi},\hat\Psi]|_{\tt on-shell} &=&\frac {\kappa(\alpha)}{\sqrt{1+\frac{\alpha^2}4}}
\int {dt dt'}\,\frac z{2\pi(t-t')^2}\, \overline{\psi_0}(t) (\sigma_3 + i t\BI ) \Upsilon  (\sigma_3-it'\BI)
\psi_0(t')\,.\cr&&\eeqa
Like in the commutative case, this term  survives  after taking the $z\rightarrow 0$ limit because $\Upsilon$ (\ref{Upsilon}) contains a factor of $1/z$, while the term that is linear in $\alpha$  in the integrand of (\ref{ncbndrem}) does not survive.
Using  the identity (\ref{idtinvUps}),
we then get
\beqa  \hat  S_{\rm total}[\overline{\hat\Psi},\hat\Psi]|_{\tt on-shell} &=&-\frac i{2\pi}\frac {\kappa(\alpha)}{\sqrt{1+\frac{\alpha^2}4}}\int_{ {\mathbb{R}}} dt \int_{ {\mathbb{R}}} dt'\,
\overline{\psi_0}(t) \frac {\sigma_3}{t-t'}
\psi_0(t')\;.\eeqa
But we saw in (\ref{kappa}) that  $ {\kappa(\alpha)}={\sqrt{1+\frac{\alpha^2}4}}$, so the on-shell action and resulting two-point function on the boundary are identical to the  commutative case to all orders in $\alpha$!

\section{Concluding remarks}
\setcounter{equation}{0}

In this paper we have continued our examination of the non-commutative $AdS/CFT$ correspondence.  Though here we restricted our attention to the case of massless fields, we have extended previous results in two directions.  First of all,  by  considering spinors,  we have gone beyond the case of scalar fields examined in \cite{Pinzul:2017wch,deAlmeida:2019awj ,Lizzi:2020cwx}.  An essential tool for the study of spinor fields on the quantized $EAdS_2$ was the map  (\ref{tmbtwnDs})
to the covariant formulation of the Dirac operator, which we were able to extend to the quantized space [cf. appendix B]. The second direction of this paper makes what we regard as a notable advance in the study of fields on quantized spaces-times. In our previous works we obtained results only to leading non-trivial order in the non-commutative parameter $\alpha$, while here we were able to solve the free scalar and spinor equations exactly, i.e., non-perturbatively in $\alpha$.  The results suggest  more generally that there may be a simple prescription for finding exact solutions on this quantum space.  It is to replace functions of the complex variables $\zeta=z+it$ and  $\bar\zeta=z-it$  with functions of the quantum operators $\hat\Xi= \kappa(\alpha)\hat{z}+i\hat t$ and $\hat \Xi^\dagger=\kappa(\alpha)\hat{z}-i\hat t$, respectively, where $\hat{z}$ and $\hat t$ are the noncommutative analogues of the Fefferman-Graham coordinates $z$ and $t$.  Thus it appears that the $z$ coordinate (which is the inverse of the radial coordinate) gets re-scaled by $\kappa(\alpha)$ upon quantization.  The $z$ coordinate is said to be associated with the energy scale of the corresponding $CFT$.  The implication then is that   the $CFT$ energy scale  gets re-scaled by the quantization of the  bulk.

We conjectured in our previous works that  boundary conformal symmetry survives the quantization of the $AdS$  bulk space-time.  By having exact results  we have now  been able to prove the conjecture  for free massless fields.  Furthermore, we have shown that the boundary two-point function is undeformed for the case of a spinor source, while it  gets re-scaled by  $\kappa(\alpha)$ for the case of a scalar source.

There are a number of extensions of this work that can be considered, some of them being crucial (in particular, item 2. below) to the study of the non-commutative correspondence.  Among them are:

\begin{enumerate}

\item  One obvious extension of our work is to obtain exact solutions for the case of { massive } scalars and spinors on quantized $EAdS_2$.  Exact results are possible and will likely involve special functions, with the domain spanned by operators $\hat\Xi$ and $\hat\Xi^\dagger$. Exact results for the corresponding boundary two-point functions can then be deduced from the asymptotic properties of these solutions, as was done here.

\item Another obvious extension is to include interactions in  scalar and spinor field theory on quantized $EAdS_2$, such as $\lambda \Phi^3$ or a Yukawa coupling.  Partial results for the $\lambda\Phi^3$ interaction were obtained in  \cite{deAlmeida:2019awj}, but because we didn't have exact solutions they required a rather involved expansion in two parameters, $ \lambda$ and the non-commutativity parameter $\alpha$.  A simpler one parameter expansion can now be pursued by exploiting  the exact solution we have found here.  It should admit leading order expressions of $n(>2)$ point correlators on the boundary. A question raised previously was whether or not the re-scaling of the boundary two-point function (in the case of scalars) could be due to a trivial field renormalization.   The study of interactions using the exact results found here should resolve this issue.

\item
Natural generalizations of quantized $EAdS_2$ were obtained in \cite{Lizzi:2020cwx}.
They correspond to the quantization of ${\mathbb{CP}}^{p,q}$.  As with $EAdS_2$, the quantization is uniquely given upon demanding that it preserves the isotropies. A construction of the Laplacian and Dirac operators on these spaces should be possible, from which one can search for exact solutions to the equations for massless scalar and spinor fields analogous to the ones found in this article.

\end{enumerate}

We plan to address these and  related topics  in forthcoming articles.

\bigskip
\bigskip

\appendice{\Large\bf ~~~Boundary terms}
\setcounter{equation}{0}

\bigskip

 The integral of the commutator of the   Moyal-Weyl star product of any two functions $\hat {\cal F}$ and $\hat{\cal G}$ on the plane  vanishes provided that the functions vanish sufficiently rapidly at infinity.  The integral does not vanish, however, if the domain $D $ (which say is two-dimensional) has a boundary $\partial D$ and the functions are unconstrained on the boundary. Below  we show that if $D$  is the bulk region of $EAdS_2$, we get the result  (\ref{smplrslt}).

Say that $D$ is some two-dimensional region with a boundary $\partial D$.  From the the definition of the  Moyal-Weyl star product (\ref{dffstrprd}), one gets
\be  \int_{D} dxdy\, [\hat {\cal F},\hat {\cal G}](x,y)=\int_{D} dxdy\,[\partial_x{\cal V}_y-\partial_y{\cal V}_x](x,y)=\int_{\partial D}\,( {\cal V}_xdx +{\cal V}_ydy)\;.\label{bndryfsp}\ee
Upon expanding  $ {\cal V}_x $ and $ {\cal V}_y $ to leading and next to leading   orders in  $\alpha$ one has
\beqa   {\cal V}_x&=&i\alpha \biggl(-\partial_x\hat {\cal F}\,\hat{\cal G} +\frac{\alpha^2}{24}\Bigl(\partial_x^3\hat{\cal F\,}\partial^2_{y}\hat{\cal G}+\partial_x\partial^2_{y}\hat{\cal F}\,\partial^2_{x}\hat{\cal G}-2\partial_x^2\partial_{y}\hat{\cal F}\,\partial_{x}\partial_{y}\hat{\cal G}\Bigr)+{\cal O}(\alpha^4)\biggr)\ , \cr &&\cr {\cal V}_y&=&i\alpha \biggl(-\partial_y\hat{\cal F}\,\hat{\cal G} +\frac{\alpha^2}{24}\Bigl(\partial_y^3\hat{\cal F}\,\partial^2_{x}\hat{\cal G}+\partial^2_x\partial_{y}\hat{\cal F}\,\partial^2_{y}\hat{\cal G}-2\partial_x\partial^2_{y}\hat{\cal F}\,\partial_{x}\partial_{y}\hat{\cal G}\Bigr)+{\cal O}(\alpha^4)\biggr)\ .\cr&& \label{vxandvy}\eeqa
Now let $D$  correspond to the bulk region of $EAdS_2$ with $\partial D$ corresponding to $z=0$, where the map between the canonical coordinates $(x,y)$ to Fefferman-Graham coordinates can be taken to be (\ref{Drbx}).  A constant $z$ slice corresponds to $x = const$. So using  $dy =
\frac{ dt}z$, we get the following integral along the asymptotic boundary
\be  \int_{D} dxdy\, [\hat{\cal F},\hat{\cal G}](x,y)=\int dt \,\frac{{\cal V}_y}z\Big|_{z=0}\;.\ee
${\cal V}_y$ is re-expressed in terms for Fefferman-Graham coordinates, using \be \partial_x=-z\partial_z -t\partial_t\;, \quad\quad \partial_y=z\partial_t\;.\ee
Assuming that $\hat{\cal F}$ and $\hat{\cal G}$, along with their derivatives, go to well defined functions of $t$ in the limit $z\rightarrow 0$, we get that the order $\alpha^3$ correction terms in ${\cal V}_y$ go like $z^3$, and so  do not contribute in the limit.  Thus only the ${\cal O}(\alpha)$  term survives in the  limit, giving the result (\ref{smplrslt}).

\bigskip
\bigskip

\appendice{\Large \bf ~~~Quantum map to noncovariant  basis}

\bigskip

In section 2 we gave the map (\ref{tmbtwnDs})  between the covariant and  noncovariant formulations of the Dirac operator on $EAdS_2$. Here we give  the quantum analogue of the map.
This will allow us to construct the Dirac operator on quantized $EAdS_2$ in the noncovariant basis.  We shall see that solutions to the Dirac equation for massless spinors are trivial in this basis.
As a check, we can then map these solutions back to the covariant basis. The result agrees with the  exact solutions   obtained in section 3,  (\ref{psi_exact}).

To obtain the quantum map we will need the quantum version of the $2\times 2$
 matrix $U$  (\ref{Uoprtr}).  For this we first introduce the left and right acting operators $\hat z^{A}$ and  $\hat t^{A}$, $A=L,R$, which act on functions $\hat{\cal F}$ on quantized $EAdS_2$ according to
\beqa
 \hat z^{L}\hat {\cal F}=\hat z\hat{\cal F}&\;,\quad\quad& \hat t^{L}\hat{\cal F}=\hat t\hat{\cal F}\;,\cr&&\cr \hat z^{R}\hat{\cal F}=\hat{\cal F}\hat z&\;,\quad\quad& \hat t^{R}\hat{\cal F}=\hat{\cal F}\hat t\ \;.\eeqa
 From the commutation relations (\ref{xt}),
 the left acting  operators $\hat t^L$ and $\hat z^L$ satisfy
\be [\hat t^L,\hat z^L]=i\alpha (\hat z^L)^2\;,\ee
while the opposite sign appears in the commutator of
$\hat t^R$ and $\hat z^R$
\be [\hat t^R,\hat z^R]=-i\alpha (\hat z^R)^{2}\;.\label{htrhzr}\ee
All other commutators between  $\hat z^{A}$ and  $\hat t^{A}$ vanish.
From (\ref{htrhzr}) it follows that
\be [\hat t^R,(\hat z^R)^{-1}]=i\alpha\BI\;.\label{cmtrtzinv}\ee
We can further introduce the right-acting version of the operator $\hat\Xi$ defined in section 3.2, along with its hermitean conjugate:
\be \hat\Xi^R=\kappa(\alpha) \hat z^R+i\hat t^R\;,\quad\quad\hat\Xi^{\dagger R}=\kappa(\alpha) \hat z^R-i\hat t^R\;.\label{XiR}\ee

We now claim that  the quantization of
 of the $2\times 2$ matrix  (\ref{Uoprtr}) is
\be  \hat U^R=\pmatrix{-\hat\Xi^R+1&-\hat\Xi^{\dagger R} -1\cr \hat\Xi^R+1&\hat\Xi^{\dagger R} -1}\frac 1{2\sqrt{\kappa(\alpha) \hat z^R}}\;.\label{hatUR}
\ee
It satisfies the analogue of the identity (\ref{prptee4U}), i.e.,
\be \hat {\tilde U}^{\dagger R}\sigma_3\hat U^R=-\sigma_3\label{opvrsnprpt}\;,\ee
where
\footnote{We can assume that  hermitean conjugation  of right acting operators $\hat A^R$ is an anti-involution.  For this define the action of
$(\hat A^R)^\dagger$   on functions $\hat{\cal F}$ on quantized $AdS_2$ by
$$ \hat{\cal F}(\hat A^R)^\dagger=(\hat A^R\hat{\cal F}^\dagger)^\dagger=(\hat{\cal F}^\dagger\hat A)^\dagger=\hat A^\dagger\hat{\cal F}\;.$$
Then for two right acting operators $\hat A^R$ and  $\hat B^R$
$$\hat {\cal F}(\hat A^R)^\dagger(\hat B^R)^\dagger=(\hat A^\dagger\hat{\cal F})(\hat B^R)^\dagger=\hat B^\dagger\hat A^\dagger\hat{\cal F}=(\hat A\hat B)^\dagger\hat{\cal F}= \hat{\cal F}((\hat A\hat B)^R)^\dagger=\hat {\cal F}((\hat B)^R(\hat A)^R)^\dagger\;.\qquad$$}
\be \hat { U}^{\dagger R}=\frac 1{2\sqrt{\kappa(\alpha) \hat z^R}}\pmatrix{-\hat\Xi^{\dagger R}+1&\hat \Xi^{\dagger R}+1\cr-\hat\Xi^{ R} -1&\hat\Xi^{ R} -1}\;.\ee
The inverse of $\hat U^R$ is
\be  ( \hat U^R)^{-1}=\frac 1{2\sqrt{\kappa(\alpha) \hat z^R}}\pmatrix{\hat\Xi^{\dagger R}-1&\hat \Xi^{\dagger R}+1\cr-\hat\Xi^{ R} -1&-\hat\Xi^{ R} +1}\;.\label{hatURinv}\ee
It is easy to check that it is the left inverse of $\hat U^R$, $ ( \hat U^R)^{-1} \hat U^R=\BI$.  To check that it is the right inverse, $ \hat U^R( \hat U^R)^{-1} =\BI$, use
\be [\hat \Xi^R,(\hat z^R)^{-1}]= -[\hat\Xi^{\dagger R},(\hat z^R)^{-1}]=-\alpha\;,\ee
which follows from (\ref{cmtrtzinv}).

From (\ref{prptee4U}), we saw that $U$ maps the classical  chirality operator $\sigma_3$ of the noncovariant basis to the chirality operator  $ \gamma $ of the covariant basis.
The analogous property holds for $\hat U$ in the quantized system.
  Upon computing  $ \hat U^R\sigma_3( \hat U^R)^{-1}$, we get
\be   \hat U^R\sigma_3( \hat U^R)^{-1}=\frac 1{\sqrt{1+\frac{\alpha^2}4}}\,(\hat X^{Ra}\tau_a-\frac \alpha 2\BI)\;,\ee
which is the chirality operator $\hat\gamma$ on quantized $EAdS_2$.  The inverse map takes $\hat\gamma$ back to $\sigma_3$,
\be  ( \hat U^R)^{-1}\hat \gamma \hat U^R=\sigma_3\;.\label{UrivgmuhUr}\ee

Now we apply the map to the Dirac operator $\hat D$, which in section 3 was formulated in the covariant basis.  We write this Dirac operator as
\be\hat D=\frac i\alpha \, \hat \gamma\Delta_c\tau^c\;,\qquad\;\;\Delta_c= \,\epsilon_{abc}\hat X^{Ra}  \hat X^{Lb}\;,\ee
where $\hat X^{Aa}$, $A=L,R$, give the left and right action of $\hat X^a$, (\ref{LRactnX}).  Using (\ref{Xtz}), we can express them in terms of  $\hat z^{A}$ and  $\hat t^{A}$ according to
\beqa
\hat X^{A0}&=&-\frac 12 \Bigl(\hat t^{A}(\hat z^{A})^{-1}\hat t^{A}+\kappa(\alpha)^2\hat z^{A}+(\hat z^{A})^{-1}\Bigr)\;,\cr&&\cr
\hat X^{A1}&=&-\frac 12 \Bigl(\hat t^{A}(\hat z^{A})^{-1}+\hat t^{A}(\hat z^{A})^{-1}\Bigr)\;,\cr&&\cr
\hat X^{A2}&=&-\frac 12 \Bigl(\hat t^{A}(\hat z^{A})^{-1}\hat t^{A}+\kappa(\alpha)^2\hat z^{A}-(\hat z^{A})^{-1}\Bigr)\;.\label{XLRitozt}
\eeqa

To write down the Dirac operator in the noncovariant basis we  apply  a similarity transformation to  $\hat D$.  We get
\be \hat D^{ ( \hat U^R)^{-1}}\; =\; ( \hat U^R)^{-1}\hat D \hat U^R\;=\;\frac i\alpha \sigma_3\, ( \hat U^R)^{-1}\Delta_c\tau^c\hat U^R\;,\label{beefftn}\ee
where we used  (\ref{UrivgmuhUr}).
Substituting (\ref{hatUR}) and (\ref{hatURinv}) into (\ref{beefftn}) gives
\be\hat D^{ ( \hat U^R)^{-1}} =\frac i{2\alpha\kappa(\alpha)} \sigma_3\,     \frac 1{\sqrt{\hat z^R}}\pmatrix{M_{11}& M_{12}\cr M_{21}&M_{22}}\frac 1{\sqrt{\hat z^R}}\;,\ee where the matrix elements $M_{ab}$ are
\beqa  M_{11}&=&\hat\Xi^{\dagger R}\Delta_+\hat\Xi^R+\Delta_- +i(\hat\Xi^{\dagger R}\Delta_1-\Delta_1\hat\Xi^R)\;,\cr&&\cr
M_{12}&= &\hat\Xi^{\dagger R}\Delta_+\hat\Xi^{\dagger R}-\Delta_- -i(\hat\Xi^{\dagger R}\Delta_1+\Delta_1\hat\Xi^{\dagger R})\;,
\cr&&\cr
M_{21}&= &-\hat \Xi^{ R}\Delta_+\hat\Xi^R+\Delta_- -i(\hat\Xi^{ R}\Delta_1+\Delta_1\hat\Xi^{ R})\;,
\cr&&\cr
M_{22}&= &-\hat\Xi^ R\Delta_+\hat\Xi^{\dagger R}-\Delta_-+i(\hat\Xi^{ R}\Delta_1-\Delta_1\hat\Xi^{\dagger R})\;,
\eeqa
and $\Delta_\pm$ and $\Delta_1$ are defined according to
\be\Delta_\pm =\Delta_0\pm\Delta_2=\epsilon_{AB}\hat X^{A\mp}\hat X^{B1}\;,\quad\quad
\Delta_1=\frac 12\epsilon_{AB}\hat X^{A+}\hat X^{B-}
\;.\ee
Here  $\hat X^{A\pm}=\hat X^{A2}\pm\hat X^{A0}$, and  $ \epsilon_{AB}$ is antisymmetric, with $ \epsilon_{LR}=1$.

The  rather involved expressions for the matrix elements can be simplified.   Starting with $M_{11}$, we can write
\beqa M_{11}&=&\;\;\hat X^{L+}\Bigl(\hat X^{R1}+\frac i2 (\hat\Xi^{\dagger R}\hat X^{R-}-\hat X^{R-}\hat\Xi^{ R})\,\Bigr)\cr&&\cr
&&+\hat X^{L-}\Bigl(\hat\Xi^{\dagger R}\hat X^{R1}\hat\Xi^{R}-\frac i2 (\hat\Xi^{\dagger R}\hat X^{R+}-\hat X^{R+}\hat\Xi^{ R})\,\Bigr)\cr&&\cr
&&-\hat X^{L1}\Bigl(\hat\Xi^{\dagger R}\hat X^{R-}\hat\Xi^{R}+\hat X^{R+}\Bigr)\;.
\eeqa
Some algebra shows that this is zero. [In fact, all three lines separately vanish.  For this we need to use (\ref{XiR}) and (\ref{XLRitozt}).]
The same result follows for
 $M_{22}$, which is expanded as
\beqa M_{22}&=&\;\;\hat X^{L+}\Bigl(-\hat X^{R1}+\frac i2 (\hat\Xi^{R}\hat X^{R-}-\hat X^{R-}\hat\Xi^{\dagger  R})\,\Bigr)\cr&&\cr
&&-\hat X^{L-}\Bigl(\hat\Xi^{R}\hat X^{R1}\hat\Xi^{\dagger R}+\frac i2 (\hat\Xi^{ R}\hat X^{R+}-\hat X^{R+}\hat\Xi^{\dagger  R})\,\Bigr)\cr&&\cr
&&+\hat X^{L1}\Bigl(\Xi^{R}\hat X^{R-}\hat\Xi^{\dagger R}+\hat X^{R+}\Bigr)\;.
\eeqa
Again,  all three lines separately vanish.  So as with the commutative Dirac operator  in the noncovariant basis (\ref{cmttvdo}), the diagonal matrix elements vanish, $M_{11}=M_{22}=0$.
Concerning the off-diagonal matrix elements, we first expand $M_{21}$:
\beqa M_{21}&=&\;\;\hat X^{L+}\Bigl(\hat X^{R1}-\frac i2 (\hat\Xi^{R}\hat X^{R-}+\hat X^{R-}\hat\Xi^{ R})\,\Bigr)\cr&&\cr
&&+\hat X^{L-}\Bigl(-\hat\Xi^{ R}\hat X^{R1}\hat\Xi^{ R}+\frac i2 (\hat\Xi^{ R}\hat X^{R+}+\hat X^{R+}\hat\Xi^{  R})\,\Bigr)\cr&&\cr
&&+\hat X^{L1}\Bigl(\hat\Xi^{R}\hat X^{R-}\hat\Xi^{ R}-\hat X^{R+}\Bigr)\;.
\eeqa
With some work this can be simplified to
\beqa M_{21}&=&\;\kappa(\alpha)\biggl(-i\hat X^{L+}-i\hat X^{L-} (\hat\Xi^{ R})^2+2\hat X^{L1}\hat\Xi^R\Bigr)\cr&&\cr
&=&i\kappa(\alpha)\Biggl(\kappa(\alpha)^2\hat z^{L}+\hat t^{L}(\hat z^{L})^{-1}\hat t^{L}-(\hat z^{L})^{-1} (\hat\Xi^{ R})^2+i\Bigl((\hat z^{L})^{-1}\hat t^{L}+\hat t^{L}(\hat z^{L})^{-1}\Bigr)\hat\Xi^{ R}\Biggr)\cr&&\cr &=&
i\kappa(\alpha)\,(\hat z^{L})^{-1}( \hat\Xi^{\dagger L}+ \hat\Xi^R-\alpha \hat z^L)(\hat\Xi^L-\hat\Xi^R )\;.
\eeqa
Finally for  $M_{12}$ we get
\beqa M_{12}&=&\;-\hat X^{L+}\Bigl(\hat X^{R1}+\frac i2 (\hat\Xi^{\dagger R}\hat X^{R-}+\hat X^{R-}\hat\Xi^{\dagger R})\,\Bigr)
\cr&&\cr
&&+\hat X^{L-}\Bigl(\hat\Xi^{\dagger R}\hat X^{R1}\hat\Xi^{\dagger R}+\frac i2 (\hat\Xi^{\dagger R}\hat X^{R+}+\hat X^{R+}\hat\Xi^{\dagger  R})\,\Bigr)\cr&&\cr
&&+\hat X^{L1}\Bigl(-\hat\Xi^{\dagger R}\hat X^{R-}\hat\Xi^{\dagger R}+\hat X^{R+}\Bigr)\;,\eeqa
which simplifies to
\beqa M_{12}
&=&\kappa(\alpha)\Bigl(-i\hat X^{L+}
-i\hat X^{L-}(\hat\Xi^{\dagger R})^2
-2\hat X^{L1}\hat\Xi^{\dagger R} \Bigr)
\cr&&\cr
&=&i\kappa(\alpha)\Biggl(\kappa(\alpha)^2\hat z^{L}+\hat t^{L}(\hat z^{L})^{-1}\hat t^{L}-(\hat z^{L})^{-1} (\hat\Xi^{\dagger R})^2-i\Bigl((\hat z^{L})^{-1}\hat t^{L}+\hat t^{L}(\hat z^{L})^{-1}\Bigr)\hat\Xi^{\dagger R}\Biggr)\cr&&\cr
 &=&
i\kappa(\alpha)\,(\hat z^{L})^{-1}(\hat \Xi^{\dagger R}+\hat \Xi^L+\alpha \hat z^L)(\hat\Xi^{\dagger L}-\hat\Xi^{\dagger R} )\;.
\eeqa
So the transformed  Dirac operator $ \hat D^{ ( \hat U^R)^{-1}}$ is
\be \frac {1}{2\alpha\hat z^{L}\sqrt{\hat z^R}}\pmatrix{0&-(\hat \Xi^{\dagger R}+ \hat\Xi^L+\alpha \hat z^L)(\hat\Xi^{\dagger L}-\hat\Xi^{\dagger R} ) \cr( \Xi^{\dagger L}+ \hat\Xi^R-\alpha \hat z^L)(\hat\Xi^L-\hat\Xi^R )& 0}\frac 1{\sqrt{\hat z^R}}\;.\label{ncDopncv}\ee  It is the quantization of the standard Dirac operator on $EAdS_2$ (\ref{cmttvdo}).

When the Dirac operator (\ref{ncDopncv}) acts on a spinor $\hat {\tilde\Psi}=\pmatrix{\hat {\tilde\Psi}_1\cr\hat {\tilde\Psi}_2}$  it gives
\be  \hat D^{ ( \hat U^R)^{-1}}\hat {\tilde\Psi}= \frac {1}{2\alpha\hat z}\pmatrix{-[\hat\Xi^\dagger,\hat {\tilde\Psi}_2\frac 1{\sqrt{\hat z}}]\,\hat\Xi^\dagger-(\hat\Xi+\alpha \hat z)\,[\hat\Xi^\dagger,\hat {\tilde\Psi}_2\frac 1{\sqrt{\hat z}}]\cr\;
[\hat\Xi,\hat {\tilde\Psi}_1\frac 1{\sqrt{\hat z}}]\,\hat\Xi\;+\,(\hat\Xi^\dagger-\alpha \hat z)\,[\hat\Xi,\hat {\tilde\Psi}_1\frac 1{\sqrt{\hat z}}]}\frac 1{\sqrt{\hat z}}\;.\ee
For the case of massless spinors, the Dirac equation implies that $[\hat \Xi,\hat {\tilde\Psi}_1\frac 1{\sqrt{\hat z}}]=[\hat\Xi^\dagger,\hat {\tilde\Psi}_2\frac 1{\sqrt{\hat z}}]=0$, and therefore we get the simple solution
\be  \hat {\tilde\Psi}=\pmatrix{F(\hat\Xi)\cr G(\hat\Xi^\dagger)}\sqrt{\hat z}\;,\label{slninncvbs}\ee
yielding  the quantum version of (\ref{mslssln}).   Once again, $\hat{F}$ and $\hat{G}$ are arbitrary elements of the polynomial algebras  generated by $\hat{\Xi}$ and $\hat{\Xi}^\dagger$, respectively.  Finally, if we act on the solution  (\ref{slninncvbs})  with $\hat U$, we get
\beqa
\hat \Psi=\hat U ^R\hat{\tilde\Psi}&=&\frac 1{2\sqrt{\kappa(\alpha)}}\,\pmatrix{-\hat\Xi^R+1&-\hat\Xi^{\dagger R} -1\cr \hat\Xi^R+1&\hat\Xi^{\dagger R} -1}\pmatrix{F(\hat\Xi)\cr G(\hat\Xi^\dagger)}\cr&&\cr
&=&\frac 1{2\sqrt{\kappa(\alpha)}}\,\pmatrix{(1-\hat\Xi)F(\hat\Xi)- (1+\hat\Xi^\dagger)G(\hat\Xi^\dagger)\cr(1+\hat\Xi)F(\hat\Xi)+ (1-\hat\Xi^\dagger)G(\hat\Xi^\dagger) }\;,
\eeqa
which agrees with the exact solution  (\ref{psi_exact}).
\bigskip
\bigskip

\end{document}